\newcommand{\ep}{\epsilon}
\newcommand{\tf}{\tilde{f}}
\newtheorem{thm}{Theorem}[section]
\newtheorem{lmm}[thm]{Lemma}
\newtheorem{cor}[thm]{Corollary}
\newtheorem{prop}[thm]{Proposition}
\theoremstyle{definition}
\newtheorem{ex}[thm]{Example}
\newcommand{\ee}{\mathbb{E}}
\newcommand{\pp}{\mathbb{P}}
\newcommand{\rr}{\mathbb{R}}
\newcommand{\var}{\mathrm{Var}}
\numberwithin{equation}{section}
\newcommand{\tg}{\tilde{g}}
\newcommand{\E}{\mathbb{E}}
\renewcommand{\P}{\mathbb{P}}
\renewcommand{\tilde}{\widetilde}
\begin{document}
\title{Estimating large causal polytrees from small samples}
\author{Sourav Chatterjee\thanks{Department of Statistics, Stanford University, Stanford, USA. Email: \href{mailto:souravc@stanford.edu}{\tt souravc@stanford.edu}. 
}}
\affil{Stanford University}
\author{Mathukumalli Vidyasagar\thanks{Indian Institute of Technology, Hyderabad, India. 
Email: \href{m.vidyasagar@ee.iith.ac.in}{\tt m.vidyasagar@ee.iith.ac.in}}}
\affil{Indian Institute of Technology, Hyderabad}



\maketitle

\begin{center}
\emph{In memory of Professor K.~R.~Parthasarathy}
\end{center}

\begin{abstract}
We consider the problem of estimating a large causal polytree from a relatively small i.i.d.~sample. This is motivated by the problem of determining causal structure when the number of variables is very large compared to the sample size, such as in gene regulatory networks. We give an algorithm that recovers the tree with high accuracy in such settings. The algorithm works under essentially no distributional or modeling assumptions other than some mild non-degeneracy conditions.
\newline
\newline
\noindent {\scriptsize {\it Key words and phrases.} Causal inference, causal polytree, DAG skeleton.}
\newline
\noindent {\scriptsize {\it 2020 Mathematics Subject Classification.} 62D20.}
\end{abstract}


\maketitle

\section{Introduction}\label{intro}
The problem of estimating causal structure from data is a central problem of causal inference. One of the earliest attempts at reconstructing causal structures, under the assumption that the underlying graph is a tree (such structures are called causal polytrees), was due to Rebane and Pearl~\cite{rp87}, who repurposed an old algorithm of Chow and Liu~\cite{chowliu68} to give a method for consistent estimation of causal polytrees (a term that was coined in \cite{rp87}). 

The Rebane--Pearl approach has several drawbacks in the modern context. First, it is based on mutual information, just like the original Chow--Liu algorithm. Estimating mutual information from data is notoriously time-consuming (see \cite{chatterjee21} for some numbers), and moreover, requires special assumptions on the distribution of the data. Second, it is not clear if the algorithm works in modern problems where the number of variables is far greater than the sample size. Such examples arise routinely in gene regulatory networks, where we may have $n$ as small as a few hundred and $p$ as large as many thousands (see \cite{singhetal18} for some examples). 

The same problems persist in other popular classical algorithms for causal structure recovery (that go beyond trees), such as the PC algorithm~\cite{sg91} and the IC algorithm~\cite{pv95}.  In the last two decades, several attempts have been made to design algorithms that work when the sample size is relatively small compared to the number of variables. For example, variants of the PC algorithm were shown to work when the number of variables grows polynomially with sample size~\cite{kb07, mkb09, cm14}. These works are limited by the assumption that the variables are jointly Gaussian. A generalization to the case where the joint distribution is transformable to Gaussian using copulas was proposed in~\cite{hd13}. 

The above algorithms all use conditional independence testing of some form. Due to the well-known difficulties of conditional independence testing in the absence of stringent distributional assumptions~\cite{shahpeters20}, there is a different body of work that uses `scored-based methods' that do not need conditional independence testing. Score-based methods in the Gaussian setting have been proposed in~\cite{chickering02, vb13, lb14}. The problem becomes less tractable in the absence of Gaussianity; solutions under various other structural assumptions have been proposed by various authors --- see, e.g.,~\cite{shimizu06, hoyer08, peters14, pb14, buhlmann14, nb16}. Methods that combine score-based methods with conditional independence testing have also been proposed, e.g., in \cite{tsamardinos06, nandy18}. 

Although score-based algorithms avoid conditional independence testing, they have their own problems. For example, score-based algorithms are computationally rather expensive, to the extent that it is hard to use them when the number of variables is very large. (For some ideas about speeding up the calculations, see~\cite{bs13, zheng}.) Moreover, they often require strong assumptions on the distribution of the data.

To summarize the discussion, there is now a multitude of different algorithms for causal structure recovery, but they all suffer from at least one of the following problems: (a) They are not fully nonparametric, often requiring strong distributional or modeling assumptions. (b) They are computationally expensive and consequently hard to implement if the number of variables is large. (c) It is not clear --- either theoretically or practically --- whether they truly work well when the number of variables is large compared to the sample size, even if the computational issue can be tackled.  

Very recently, researchers have started paying attention to the above deficiencies of the algorithms proposed in the literature. Fully nonparametric methods have been proposed in \cite{gao1, gao2, atb21} with finite sample guarantees. These methods seem to work well when the number of variables is comparable to the sample size, as do some recent parametric approaches, such as~\cite{louetal21}. 

In this paper, we propose a fully nonparametric algorithm that accurately recovers a causal polytree from the data even if the number of variables is much larger than the sample size. (Theoretically, we only need that the sample size is logarithmic in the number of variables.) The efficacy of the algorithm is demonstrated through theoretical results and simulated examples. The main deficiency of our approach is that it is applicable only when the causal structure is a tree.  

The paper is organized as follows. We begin with an introduction to directed acyclic graphs and their relevance in causal networks in Section \ref{dagsec}. The first part of our algorithm, which estimates the skeleton, is presented in Section \ref{algo}. The theoretical guarantee for the skeleton recovery algorithm is given in Section \ref{result}. Section \ref{directionsec} presents the second part of our algorithm, which recovers the directionalities of the arrows in the estimated skeleton. The theoretical guarantee for this part is given in Section \ref{directionresult}. Simulated examples are in Section \ref{simul}. The remaining sections are devoted to proofs.

\medskip
\noindent {\bf Acknowledgements.} 
S.C.'s research was partially supported by NSF grants DMS-1855484 and DMS-2113242. S.C.~thanks Mona Azadkia for helpful comments. M.V.'s research was supported by the Science and Engineering Research Board (SERB), Government of India.

\section{Directed acyclic graphs}\label{dagsec}
Let $X = (X_i)_{i\in V}$ be a finite collection of real-valued  random variables. A directed acyclic graph $G$ with vertex set $V$ is a directed graph with edges connecting elements of $V$, which contains no cycles. Given a directed graph $G$, each $i\in V$ has a well-defined set of `parent nodes' with arrowing coming into $i$. Let $p(i)$ denote this set of parent nodes, which may be empty. We say that $G$ describes the dependency structure of $X$ if there is some measure $\mu$ on $\rr$ such that $X$ has a probability density with respect to the product measure induced by $\mu$ on $\rr^V$,  which can be written as 
\begin{align}\label{density}
f(x) = \prod_{i\in V} f_i(x_i|(x_j)_{j\in p(i)}),
\end{align}
where $f_i$ is the conditional density of $X_i$ given $(X_j)_{j\in p(i)}$. If $p(i)$ is empty, $f_i$ is simply the marginal density of $X_i$. This is an example of a graphical model, sometimes called a `Bayesian network'. 

If we remove the directionalities of the edges, the resulting undirected graph is called the `skeleton' of $G$. If the skeleton is a tree, then $G$ is called a `causal polytree'.

Note that neither the DAG $G$, nor the decomposition \eqref{density} of the density $f$, is uniquely determined by the distribution of~$X$. If we specify that the skeleton of $G$ is a tree, then under mild conditions (e.g., those in Theorem \ref{mainresult} below), the skeleton is uniquely determined. However, even then, the directionalities of the arrows may not be uniquely determined by the law of $X$. This is nicely demonstrated by a class of graphs that we will call `outgoing causal polytrees'. We will say that a causal polytree $G$ is outgoing if every vertex is incident to at most one incoming edge. Under this condition, it is easy to see that there is a unique `root' which has no incoming edges, and given the root and the skeleton, the directionalities of the edges are fully determined. It is not hard to show that if $G$ describes the dependency structure on $X$, and $G'$ is an outgoing causal polytree with the same skeleton but a different root, then $G'$ also describes the dependency structure of $X$. For example, if $X= (X_1,X_2,X_3)$, and the dependency structure of $X$ is described by the outgoing causal polytree $1\to 2\to 3$, then it is also described by $1\leftarrow 2 \to 3$ and $1\leftarrow 2\leftarrow 3$, which are the two outgoing trees with the same skeleton as the first one but with roots $2$ and $3$, respectively.

\section{Algorithm for recovering the skeleton}\label{algo}
Our algorithm is divided into two parts. The first part, presented in this section, is for recovering the skeleton. Both parts make use of a coefficient of correlation proposed recently in \cite{chatterjee21}. This is defined as follows. Let $(x_1,y_1),\ldots,(x_n,y_n)$ be $n$ pairs of real numbers, where $n\ge 2$. Let $\pi$ be a permutation of $1,\ldots, n$ such that $x_{\pi(1)}\le x_{\pi(2)}\le \cdots \le x_{\pi(n)}$. If there is more than one such permutation, choose one uniformly at random. For each $i$, let $r_i$ be the number of $j$ such that $y_j\le y_{\pi(i)}$ and let $l_i$ be the number of $j$ such that $y_j\ge y_{\pi(i)}$. Then define the $\xi$-coefficient between the $x_i$'s and the $y_i$'s as 
\begin{align}\label{xidef}
\xi_n := 1-\frac{n\sum_{i=1}^{n-1}|r_{i+1}-r_i|}{2\sum_{i=1}^n l_i(n-l_i)}. 
\end{align}
Note that the $\xi$-coefficient is not symmetric --- that is, the $\xi$-coefficient between the $y_i$'s and the $x_i$'s is the not the same as the $\xi$-coefficient between the $x_i$'s and the $y_i$'s. Another thing to note is that the denominator on the right side is equal to zero if and only if all the $y_i$'s are equal. In this case, $\xi_n$ is left undefined in \cite{chatterjee21}; but in this paper we define $\xi_n$ to be $0$ if this happens.

Let $X = (X_i)_{i\in V}$ be  a finite collection of random variables, whose dependency structure is described by a causal polytree $G$. Let $T$ be the skeleton of $G$. 
Our data consists of $n$ i.i.d.~copies  $X^1,\ldots,X^n$ of the collection $X$. We produce an estimate $T_n$ of $T$ in the following two steps:
\begin{itemize}
\item[Step 1.] Let $X^m_i$ denote the $i^{\textup{th}}$ coordinate of $X^m$. For each distinct $i,j\in V$, let $\xi^n_{ij}$ denote the $\xi$-coefficient between $(X_i^m)_{1\le m\le n}$ and $(X_j^m)_{1\le m\le n}$. Define a subgraph $G_n$ of the complete graph on $V$ as follows. For each distinct pair of vertices $i,j\in V$, keep the undirected edge $(i,j)$ in $G_n$ if and only if there does not exist $k\in V\setminus\{i,j\}$ such that 
\begin{align}\label{mainxicond}
\xi^n_{ki} \ge  \xi^n_{ji} \ \text{ and } \ \xi^n_{kj} \ge  \xi^n_{ij}.
\end{align}
\item[Step 2.] For each undirected edge $(i,j)$ in $G_n$, let $w_{ij}^n := \min\{\xi_{ij}^n, \xi_{ji}^n\}$ be the weight of $(i,j)$. Let $T_n$ be the maximal weighted spanning forest (MWSF) of $G_n$ with these edge-weights. That is, $T_n$ maximizes the sum of edge-weights among all spanning forests of $G_n$. If there is more than one MWSF, choose one according to some arbitrary rule. This $T_n$ is our estimate of $T$.
\end{itemize}
The first step of the above algorithm is inspired by a proposal from \cite{singhetal18}, where a coefficient called the `$\phi$-mixing coefficient'~\cite{av14} is used with similar intent, instead of the $\xi$-coefficient. The second step is inspired by the Chow--Liu algorithm~\cite{chowliu68} mentioned earlier, with mutual information replaced by the $w$-weights defined above.

\section{Theoretical guarantee for skeleton recovery}\label{result}
To state our theoretical guarantee for the skeleton recovery algorithm proposed in the previous section, we need a small amount of preparation. First, recall that the maximal correlation coefficient $R(X,Y)$ between two random variables $X$ and $Y$~\cite{gebelein41, renyi59} is defined as the supremum of the Pearson correlation between $f(X)$ and $g(Y)$ over all measurable functions $f$ and $g$ such that $f(X)$ and $g(Y)$ have finite and nonzero variance. It is easy to see that $R(X,Y)\in [0,1]$. Note that if $X$ is a constant, then there will not exist an $f$ as above. In this case, we define $R(X,Y)=0$. Similarly, we let $R(X,Y)=0$ if $Y$ is a constant. 

Next, for a random variable $X$, we define a quantity $\alpha(X)$ that measures the `degree to which $X$ is not a constant', as $\alpha(X) := \pp(X\ne X')$ where $X'$ is an independent copy of $X$. We will refer to it as the `$\alpha$-coefficient' of $X$. Note that $\alpha(X)>0$ if and only if $X$ is not a constant.

Lastly, recall that if $(X_1,Y_1),(X_2,Y_2),\ldots$ are i.i.d.~copies of a pair of random variables $(X,Y)$, where $Y$ is non-constant, and $\xi_n$ is the $\xi$-correlation coefficient between $X_1,\ldots,X_n$ and $Y_1,\ldots,Y_n$, then one of the main results of \cite{chatterjee21} is that as $n\to\infty$, $\xi_n$ converges in probability to the limit
\begin{align}\label{xicordef}
\xi(X,Y) = \frac{\int \var(\ee(1_{\{Y\ge t\}}|X)) d\mu(t)}{\int \var(1_{\{Y\ge t\}}) d\mu(t)}
\end{align}
where $\mu$ is the law of $Y$. Moreover, $\xi(X,Y)$ is always in $[0,1]$, $\xi(X,Y)=0$ if and only if $X$ and $Y$ are independent, and $\xi(X,Y) = 1$ if and only if $Y$ is equal to a measurable function of $X$ with probability one. The only assumption required for this result is that $Y$ is not a constant. We will refer to $\xi(X,Y)$ as the `$\xi$-correlation' between $X$ and $Y$. In this article, we will prove the auxiliary result that $\ee(\xi_n)$ converges to $\xi(X,Y)$ as $n\to \infty$. This does not directly follow from the results of \cite{chatterjee21}; a small additional argument is needed (see Corollary \ref{xicor}). In \cite{chatterjee21}, $\xi(X,Y)$ is left undefined if $Y$ is a constant; in this article, we define $\xi(X,Y)$ to be $0$ if $Y$ is a constant.

Let us now return to the setting of Section \ref{algo}. The following theorem shows, in essence, that $T_n = T$ with high probability if no $X_i$ is a constant, no $X_i$ and $X_j$ are independent when $i$ and $j$ are neighbors in $T$, no $X_i$ and $X_j$ have maximal correlation $1$ when $i$ and $j$ are neighbors in $T$, and $n\gg \log p$.
\begin{thm}\label{mainresult}
Let $X = (X_i)_{i\in V}$ be a finite collection of random variables with a causal polytree skeleton $T$, as defined at the beginning of Section~\ref{intro}, and let $p:=|V|$. Let $T_n$ be the estimate of $T$ based on a sample of $n$ i.i.d.~copies of $X$, as defined in Section~\ref{algo}. For each $i$ and $j$, let $\xi_{ij}$ be the $\xi$-correlation coefficient between $X_i$ and $X_j$, let $R_{ij}$ be the maximal correlation coefficient between $X_i$ and $X_j$, and let $\alpha_i$ be the $\alpha$-coefficient of $X_i$, as defined above. Suppose that there exists $\delta\in (0,1)$ such that: 
\begin{enumerate}
\item $\xi_{ij}\ge \delta$ whenever $(i,j)$ is an edge in $T$.
\item $R_{ij} \le 1-\delta$ whenever $(i,j)$ is an edge in $T$.
\item $\alpha_i \ge \delta$ for all $i\in V$. 
\end{enumerate}
Furthermore, suppose that $n$ is so large that $|\ee(\xi_{ij}^n) - \xi_{ij}| \le \delta^2/8$ for all distinct $i,j\in V$. 
Then there exist positive constants $C_1(\delta)$ and $C_2(\delta)$ depending only on $\delta$, such that
\[
\pp(T_n \ne T) \le C_1(\delta)p^2 e^{-C_2(\delta) n}. 
\]
\end{thm}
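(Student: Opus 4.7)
The plan is to combine a concentration bound for the empirical $\xi^n_{ij}$'s with a deterministic structural analysis of the limits $\xi_{ij}$ on the polytree. First I would establish concentration: since $\xi^n_{ij}$ is a rank-based statistic whose value changes by $O(1/n)$ when any single sample is perturbed, McDiarmid's bounded-differences inequality gives $\pp(|\xi^n_{ij} - \ee(\xi^n_{ij})| > t) \le 2 e^{-c n t^2}$ for a universal constant $c$. Combined with the hypothesized bias bound $|\ee(\xi^n_{ij}) - \xi_{ij}| \le \delta^2/8$ and the choice $t = \delta^2/8$, a union bound over the $O(p^2)$ ordered pairs produces a ``good event'' $\Omega_n$, with $\pp(\Omega_n^c) \le C_1(\delta) p^2 e^{-C_2(\delta) n}$, on which $|\xi^n_{ij} - \xi_{ij}| \le \delta^2/4$ for every $i \ne j$.

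The main deterministic input is a quantitative data processing inequality (DPI) for $\xi$-correlation. The qualitative inequality $\xi(X, Z) \le \xi(Y, Z)$ under $X \perp Z \mid Y$ follows directly from \eqref{xicordef} via the tower identity $\ee(1_{\{Z \ge t\}} \mid X) = \ee(\ee(1_{\{Z \ge t\}} \mid Y) \mid X)$ and conditional Jensen. To turn this into a quantitative gap I would use R\'enyi's variational characterization of the maximal correlation, $\var(\ee[h(Y) \mid X]) \le R(X, Y)^2 \var(h(Y))$, which via the law of total variance upgrades the DPI to $\xi(Y, Z) - \xi(X, Z) \ge (1 - R(X, Y)^2)\, \xi(Y, Z)$. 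Maximal correlation itself obeys the DPI $R(X, Z) \le R(Y, Z)$ under $X \perp Z \mid Y$ (Cauchy--Schwarz after conditioning on $Y$), so iterating along any $T$-path whose internal vertices are not v-structures propagates condition~(2) into the global bound $R(X_k, X_i) \le 1 - \delta$. A $T$-path that does contain a v-structure yields marginal independence of its endpoints and hence $\xi = 0$ between them. Combining these two cases with condition~(1) produces the key structural gap: for any edge $(i, j) \in T$ and any third vertex $k$ on the $i$-side of $(i,j)$ in $T$, $\xi_{ij} - \xi_{kj} \ge \delta^2$; symmetrically, for $k$ on the $j$-side, $\xi_{ji} - \xi_{ki} \ge \delta^2$. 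The same reasoning gives, for any $(u, v) \notin T$ with $X_u \not\perp X_v$, the bounds $\xi_{ku} - \xi_{vu} \ge \delta^2$ and $\xi_{kv} - \xi_{uv} \ge \delta^2$, where $k$ is the neighbor of $u$ on the $T$-path to $v$.

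On $\Omega_n$ these $\delta^2$ gaps exceed twice the noise $\delta^2/4$, so the strict inequalities above transfer to the empirical $\xi^n$'s. Every edge $(i, j) \in T$ therefore survives Step~1, since for any candidate $k$ at least one of the two inequalities in \eqref{mainxicond} is violated with room to spare. Every non-tree edge $(u, v)$ with $X_u \not\perp X_v$ is deleted in Step~1, witnessed by the neighbor of $u$ on the $T$-path to $v$. The only non-tree edges that can survive Step~1 are therefore those with $X_u \perp X_v$ marginally, for which $\xi_{uv} = \xi_{vu} = 0$ and hence $w^n_{uv} \le \delta^2/4$; tree edges on the other hand have $w^n_{ab} \ge \delta - \delta^2/4 > \delta^2/4$ by condition~(1). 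The MWSF in Step~2 therefore selects exactly $T$, and so $T_n = T$ on $\Omega_n$, giving the claimed bound.

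The technical heart of the argument is the quantitative DPI in the second paragraph, and in particular the case analysis over v-structures on $T$-paths, whose blocking behavior flips under conditioning in d-separation; threading a $\delta$-uniform gap constant through all of these cases, and verifying that the iterated DPI for maximal correlation uses the correct conditional independences at each step, is where the bulk of the work lies.
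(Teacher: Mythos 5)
Your overall architecture (concentration plus union bound, a quantitative data-processing inequality driven by maximal correlation, and an analysis of Steps 1 and 2) matches the paper's, and your treatment of tree edges surviving Step 1 is essentially the paper's argument. But there are two genuine gaps. First, the concentration step: $\xi^n_{ij}$ does \emph{not} have $O(1/n)$ bounded differences uniformly over realizations, because it is a ratio whose denominator $2\sum_i l_i(n-l_i)$ in \eqref{xidef} can be as small as order $n$ when the $Y$-sample is nearly degenerate; in that regime moving a single data point changes $\xi_n$ by $O(1)$ (indeed the paper only proves the crude bound $|\xi_n|\le 1+n^2$ in Lemma \ref{xibound}), so McDiarmid does not apply directly. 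This is exactly why the theorem assumes $\alpha_i\ge\delta$ --- a hypothesis your argument never uses, which is a warning sign. The paper's Proposition \ref{concprop} instead lower-bounds the population denominator by $\alpha^2/32$ (Lemma \ref{betalmm}), shows the empirical denominator $S_n$ concentrates around it via Dvoretzky--Kiefer--Wolfowitz and Hoeffding, imports a separate concentration bound for the numerator from the supplement of \cite{chatterjee21}, and only then controls the ratio on the event that $S_n$ is bounded below.

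Second, and more structurally, your claim that every non-tree pair $(u,v)$ with $X_u\not\perp X_v$ is deleted in Step 1 is not justified. Taking $k$ to be the neighbor of $u$ on the $T$-path to $v$, the gap $\xi_{ku}-\xi_{vu}\ge\delta^2$ is fine because $(k,u)$ is a tree edge and so $\xi_{ku}\ge\delta$; but $(k,v)$ is in general \emph{not} a tree edge, so the DPI only gives the multiplicative bound $\xi_{uv}\le(1-\delta)^2\xi_{kv}$, hence an additive gap $\xi_{kv}-\xi_{uv}\ge(2\delta-\delta^2)\xi_{kv}$ that degenerates as $\xi_{kv}\to 0$. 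A weakly dependent non-tree pair can therefore survive Step 1, and your final appeal to ``surviving non-tree edges have weight $\le\delta^2/4$'' fails for such pairs. The paper does not try to delete these edges in Step 1; it shows instead that if such an edge $(i,j)$ were selected by the MWSF, its weight would have to exceed $\delta/2$ by the spanning-tree exchange argument (Lemma \ref{mstlmm}, applied to a tree edge $(k,l)$ crossing the cut induced by deleting $(i,j)$ from $T_n$), while the failure of one of the two inequalities in \eqref{mainxicond} at an internal vertex of the $T$-path forces $\min\{\xi^n_{ij},\xi^n_{ji}\}\le\delta/2$ --- a contradiction. Some argument of this kind is needed to close your proof; as written, Step 2 is where the surviving weak non-tree edges must be killed, and your proposal does not address them there.
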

We remark that the lower bound on $n$ required in the above result is likely to depend only on $\delta$ in non-pathological situations, because it depends only on the joint distributions on $(X_i,X_j)$ over all edges $(i,j)$ in $T$. If none of these joint distributions are too `weird', the condition $|\ee(\xi_{ij}^n) - \xi_{ij}| \le \delta^2/8$ is likely to be satisfied for all $n$ exceeding some threshold depending only on $\delta$.

Another remark is that the theorem shows that if for all edges $(i,j)$ in $T$, $R_{ij}<1$ and $X_i$ and $X_j$ are dependent, and each $X_i$ is non-constant, then $T$ must be the unique causal polytree skeleton.

\section{Algorithm for recovering directionalities}\label{directionsec}
In this section we present the second part of our algorithm, which recovers the directionalities of the edges after estimating the skeleton. We will need the following conditional dependence coefficient proposed in \cite{azadkiachatterjee21}. Let $X$, $Y$ and $Z$ be three real-valued random variables, and let $(X_1,Y_1,Z_1),\ldots,(X_n,Y_n,Z_n)$ be i.i.d.~copies of $(X,Y, Z)$, where $n\ge 2$. For each $i$, let $N(i)$ be the index $j$ such that $X_j$ is the nearest neighbor of $X_i$, where ties are broken uniformly at random. Let $M(i)$ be the index $j$ such that $(X_j, Z_j)$ is the nearest neighbor of $(X_i, Z_i)$ in $\rr^2$, again with ties broken uniformly at random. Let $R_i$ be the rank of $Y_i$, that is, the number of $j$ such that $Y_j\le Y_i$. Define 
\[
\tau_n(Y, Z|X) := \frac{\frac{1}{n^2}\sum_{i=1}^n (\min\{R_i, R_{M(i)}\} - \min\{R_i, R_{N(i)}\})}{\frac{1}{n^2} \sum_{i=1}^n (R_i - \min\{R_i, R_{N(i)}\})},
\]
interpreting it as zero if the denominator is zero. This is the statistic $T_n$ defined in \cite{azadkiachatterjee21}, but we call it $\tau_n$ here so that there is no confusion with the tree $T_n$. Now, for our collection $X= (X_i)_{i\in V}$, define
\[
\tau_{kji}^n := \tau_n(X_k, X_j|X_i).
\]
Here is the proposed algorithm for recovering the directionalities of the edges in  the estimated skeleton: 
\begin{itemize}
\item[Step 0.] Estimate the skeleton using the algorithm proposed in Section \ref{algo}. The rest of the algorithm is for assigning directionalities to the edges of this estimated skeleton.
\item[Step 1.] Inspect each vertex $i\in V$, in some order that is not dependent on the data. Consider two cases:
\begin{itemize}
\item[Case 1.] If no incoming edge into $i$ has yet been detected, then for each pair of neighbors $(j,k)$ of $i$ in the estimated skeleton (sequentially in some order that is not dependent on the data), check  whether $\tau_{kji}^n \ge \xi_{jk}^n$. If this is found to be  true for some $(j,k)$, then declare that the edges $(j,i)$ and $(k,i)$ are both directed towards $i$ (unless determined otherwise in a previous step), and move on to the next $i$. If this is not true, do nothing and move on to the next $i$. 
\item[Case 2.] If we have already determined that some neighbor $j$ of $i$ in the estimated skeleton has an edge directed towards $i$, then fix some such $j$ and check for every neighbor $k$ of $i$ (other than $j$) whether $\tau_{kji}^n \ge \xi_{jk}^n$. For those $k$ for which this holds, declare that the edge $(k,i)$ points towards $i$ (unless determined otherwise in a previous step). For those $k$ for which this does not hold, declare that the edge $(k,i)$ points towards $k$ (unless determined otherwise in a previous step). 
\end{itemize}
\item[Step 2.] Repeat Step 1 until a stage is reached where no extra directionalities are detected upon executing Step~1.
\item[Step 3.] Inspect each vertex $i\in V$, in some order that is not dependent on the data. If there is at least one edge directed towards $i$, assign outgoing directionality to any edge incident to $i$ whose directionality has not yet been assigned. Otherwise, move on to the next $i$.
\item[Step 4.] Repeat Step 3 until no further directionalities are detected.
\item[Step 5.] If there are some edges in the skeleton whose directionalities have remained undecided after the execution of the above steps, then declare their directions to be the same as in the outgoing tree with an arbitrarily chosen vertex as the root. 
\end{itemize}

\section{R package}
An R package for implementing the above algorithms, called `PolyTree', is now available on CRAN~\cite{package}.

\section{Theoretical guarantee for recovering directionalities}\label{directionresult}
We now present our theoretical guarantee for recovery of the full causal structure of $X$. 
Take any $i,j,k$ such that $i$ is a neighbor of both $j$ and $k$ in the skeleton $T$. Let 
\[
\tau_{kji} := \tau(X_k, X_j|X_i),
\]
where $\tau$ is the same as the population statistic $T$ defined in \cite{azadkiachatterjee21}, that is, for any three random variables $X$, $Y$, and $Z$,
\[
\tau(Y,Z|X) = \frac{\int \E(\var(\P(Y\ge t|Z,X)|X)) d\mu(t)}{\int \E(\var(1_{\{Y\ge t\}}|X)) d\mu(t)},
\]
where $\mu$ is the law of $Y$. Let $q_{kji}^n$ and $s_{ji}^n$ denote the numerator and denominator in the definition of $\tau_{kji}^n$. Let $q_{kji}$ and $s_{ji}$ denote the almost sure limits of $q_{kji}^n$ and $s_{ji}^n$ as $n\to \infty$. By \cite[Theorem 9.1]{azadkiachatterjee21}, these limits exist and are non-random. 
\begin{thm}\label{directionthm}
Let all assumptions of Theorem \ref{mainresult} be valid. Additionally, we make the following assumptions, with $\delta$ being the same as in Theorem \ref{mainresult}:
\begin{enumerate}
\item Take any $i,j,k$ such that $i$ is a neighbor of both $j$ and $k$ in the tree $T$. If the arrows are both directed towards $i$ in some valid DAG, then $\tau_{kji} \ge \delta$. If this fails in some valid DAG, then $\xi_{jk}\ge \delta$.
\item For any two neighboring vertices  $i$ and $j$ in $T$, the $s_{ij}\ge \delta$.
\item The sample size $n$ is so large that $|\ee(q_{kji}^n) - q_{kji}| \le \delta^3/8$ and $|\ee(s_{ji}^n) - s_{ji}|\le \delta^3/8$ for all distinct $i,j,k\in V$.
\end{enumerate}
Then there there positive constants $C_1(\delta)$ and $C_2(\delta)$ depending only on $\delta$, such that
\[
\pp(\textup{The recovered polytree is a valid DAG for $X$}) \ge 1-  C_1(\delta)p^3 e^{-C_2(\delta) n}. 
\]
\end{thm}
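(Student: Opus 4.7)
The plan is to reduce the problem to correct identification of colliders on every length-two path in the true skeleton, and then combine this with Theorem \ref{mainresult}. First, I would condition on the event (of probability at least $1-C_1(\delta)p^2 e^{-C_2(\delta)n}$) that the estimated skeleton equals the true skeleton $T$. Fix any triple $(i,j,k)$ with $i$ a common neighbor of $j$ and $k$ in $T$. Since $T$ is a tree, the unique path between $j$ and $k$ is $j\text{--}i\text{--}k$, and because the set of v-structures of a polytree is constant across its Markov equivalence class, either every valid DAG has $j\to i \leftarrow k$ (the collider case) or none does (the non-collider case). By d-separation, in the collider case $X_j$ and $X_k$ are marginally independent and so $\xi_{jk}=0$, while hypothesis~(1) forces $\tau_{kji}\ge \delta$; in the non-collider case $X_j$ and $X_k$ are independent given $X_i$ and so $\tau_{kji}=0$, while hypothesis~(1) forces $\xi_{jk}\ge \delta$. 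Thus the sign of $\tau_{kji}-\xi_{jk}$ records the collider status of $i$ for the pair $(j,k)$, with an absolute gap of at least $\delta$.

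Next, I would show that, with probability at least $1-C_1(\delta)p^3 e^{-C_2(\delta)n}$, the empirical comparison $\tau^n_{kji}$ versus $\xi^n_{jk}$ agrees with its population sign for every triple $(i,j,k)$ simultaneously. The statistic $\xi^n_{jk}$ concentrates around $\ee\xi^n_{jk}$ by the bounded-differences inequality used in the proof of Theorem \ref{mainresult}, and the bias $|\ee\xi^n_{jk}-\xi_{jk}|$ is controlled by the sample-size hypothesis of Theorem \ref{mainresult}, which is in force. For $\tau^n_{kji}$, which is a ratio of nearest-neighbor functionals rather than a $U$-statistic, I would concentrate the numerator $q^n_{kji}$ and the denominator $s^n_{ji}$ separately using the concentration results of \cite{azadkiachatterjee21}; hypothesis~(3) controls their biases, while hypothesis~(2) keeps $s_{ji}$ (and hence $\ee s^n_{ji}$) bounded away from zero, so a standard truncation on $\{s^n_{ji}\ge \delta/2\}$ upgrades numerator/denominator concentration into ratio concentration at scale $\delta/3$. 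A union bound over the $O(p^3)$ triples yields the $p^3$ prefactor in the final failure probability.

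Finally, on the event that the skeleton is correct and every empirical test returns the correct Boolean answer, I would verify that the directionality routine of Section \ref{directionsec} outputs a DAG consistent with $X$. Step~1 Case~1 identifies every forced collider at each vertex; once an incoming edge $j\to i$ is known, Step~1 Case~2 applied to each remaining neighbor $k$ of $i$ correctly decides collider (declaring $k\to i$) versus non-collider (declaring $i\to k$), and Step~2 iterates these determinations to a fixed point. On our high-probability event, Steps~3 and~4 only fire redundantly. What remains are maximal connected sub-skeletons containing no forced collider; in any valid DAG each such sub-skeleton must be an outgoing tree (otherwise a vertex with two internal parents would produce a forced collider that the tests already detected), so Step~5's arbitrarily rooted outgoing orientation is valid by the outgoing-polytree remark in Section \ref{dagsec}. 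Combining this with the skeleton event via union bound yields the stated probability. The principal obstacle is the concentration of the ratio $\tau^n_{kji}$: unlike $\xi^n_{jk}$ it is built from nearest-neighbor indices in the conditioning variable, so bounding its fluctuations requires the refined concentration machinery of \cite{azadkiachatterjee21} together with the truncation driven by hypothesis~(2); once this analytical input is secured, the remaining d-separation and algorithmic arguments are structural and routine.
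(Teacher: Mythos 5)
Your plan follows essentially the same route as the paper: reduce to the true skeleton via Theorem \ref{mainresult}, establish the population dichotomy ($\tau_{kji}\ge\delta$ with $\xi_{jk}=0$ for colliders versus $\tau_{kji}=0$ with $\xi_{jk}\ge\delta$ otherwise), concentrate $q^n_{kji}$, $s^n_{ji}$ and $\xi^n_{jk}$ with a union bound over $O(p^3)$ triples so that every empirical comparison matches its population counterpart, and then verify structurally that the resulting population-level algorithm outputs a valid DAG, finishing with the outgoing re-rooting argument for Step 5. The only minor inaccuracies are that Steps 3--4 do assign genuinely new (correct) directions rather than ``firing redundantly,'' and that justifying Step 5 requires slightly more than the whole-tree re-rooting remark of Section \ref{dagsec}: one needs the paper's Lemma \ref{outgoing2}, i.e.\ that each remaining undirected subtree has all its boundary edges directed outward, so that re-rooting the subtree alone still yields a valid DAG.
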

The first assumption above is a quantitative version of the common assumption that if $j\to i\leftarrow k$, then $X_j$ and $X_k$ are conditionally dependent given $X_i$, and if $j\to i\to k$ or $j\leftarrow i\to k$ or $j\leftarrow i\leftarrow k$, then $X_j$ and $X_k$ are unconditionally dependent. The second assumption is a quantitative version of the assumption that if $i$ and $j$ are neighbors, then $X_i$ and $X_j$ are dependent (essentially, the same as the first assumption of Theorem \ref{mainresult}). The third assumption is reasonable for the same reasons as discussed for the analogous assumption in Theorem \ref{mainresult}.

\section{Examples}
\subsection{Simulations}\label{simul}
In this section, we present simulation results for four kinds of polytrees. The code used for all of the following is available at \url{https://souravchatterjee.su.domains/condep.R}. The first is the linear tree, which is just a sequence of nodes arranged in a line, with adjacent ones joined by edges. If there are $p$ nodes, the random variables $X_1,\ldots,X_p$ are generated according to the recursion
\[
X_i = \frac{X_{i-1} + \ep_i}{\sqrt{2}}
\]
with $X_1 = \ep_1$, where $\ep_1,\ldots,\ep_p$ are i.i.d.~$N(0,1)$ random variables.

The second is the binary tree, which has a root node with has two children, each child has two children of its own, and so on. If $p$ is the number of nodes, we denote them by $1,\ldots, p$, with $1$ denoting the root node. If $i$ is the parent of a node $j$, we define
\[
X_j = \frac{X_i + \ep_j}{\sqrt{2}},
\]
with $X_1 = \ep_1$, where again $\ep_1,\ldots,\ep_p$ are i.i.d.~$N(0,1)$ random variables.

The third is the star tree, which has one central node and the other nodes are all children of the central node. If the nodes are marked $1,\ldots, p$, we let $1$ be the central node. We let $X_1 = \ep_1$, and for each $i\ge 2$, we let
\[
X_i = \frac{X_1 + \ep_i}{\sqrt{2}},
\]
where $\ep_1,\ldots,\ep_p$ are i.i.d.~$N(0,1)$ random variables. 

Finally, the fourth is the reverse binary tree, which is exactly the same as the binary tree, but the random variables are defined differently. Here, we let $X_i$ to be an independent $N(0,1)$ random variable for each leaf $i$. Then, we define the $X_i$'s for the remaining nodes by backward induction as follows. If $i$ is an internal node with children $j$ and $k$, we define
\[
X_i = \frac{X_j + X_k + \ep_i}{\sqrt{3}},
\]
where $\ep_i$'s are i.i.d.~$N(0,1)$ random variables. It is not hard to see that the trees described in each of the above examples are indeed causal polytree skeletons of the $X_i$'s.

\begin{table}[t!]
\centering
\caption{Proportion of undirected skeleton edges correctly identified by our algorithm. \label{maintable}}
\begin{footnotesize}
\begin{tabular}{rrrrrrr}
  \toprule
  & & & \multicolumn{4}{c}{Sample size $n$} \\
  \cmidrule{4-7}
Tree type & Tree size $p$ & & $50$ & $100$ & $200$ & $300$ \\ 
\midrule
\ldelim\{{3}{*}[Linear] & $15$ & & $0.82$ & $0.96$ & $0.99$ & $1.00$\\
& $511$ & & $0.82$ & $0.94$ & $1.00$ & $1.00$ \\
& $1023$ & & $0.70$ & $0.93$ & $1.00$ & $1.00$ \\
\ldelim\{{3}{*}[Binary] & $15$ & & $0.80$ & $0.93$ & $0.99$ & $1.00$\\
& $511$ &  & $0.69$ & $0.92$ & $0.99$ & $1.00$ \\
& $1023$ & & $0.67$ & $0.93$ & $0.99$ & $1.00$ \\ 
\ldelim\{{3}{*}[Star] & $15$ & & $0.55$ & $0.83$ & $0.98$ & $1.00$\\
& $511$ & & $0.09$ & $0.33$ & $0.77$ & $0.94$ \\
& $1023$ & & $0.05$ & $0.24$ & $0.70$ & $0.92$ \\
\ldelim\{{3}{*}[Reverse binary] & $15$ & & $0.68$ & $0.81$ & $0.97$ & $0.99$ \\
& $511$ & & $0.39$ & $0.69$ & $0.94$ & $0.99$ \\
& $1023$ & & $0.32$ & $0.65$ & $0.92$ & $0.99$ \\
   \bottomrule
\end{tabular}
\end{footnotesize}
\end{table}

\begin{table}[t!]
\centering
\caption{Proportion of directed edges correctly identified by our algorithm. \label{maintable2}}
\begin{footnotesize}
\begin{tabular}{rrrrrrr}
  \toprule
  & & & \multicolumn{4}{c}{Sample size $n$} \\
  \cmidrule{4-7}
Tree type & Tree size $p$ & & $50$ & $100$ & $200$ & $300$ \\ 
\midrule
\ldelim\{{3}{*}[Linear] & $15$ & & $0.54$ & $0.80$ & $0.94$ & $0.96$\\
& $511$ & & $0.37$ & $0.49$ & $0.78$ & $0.95$ \\
& $1023$ & & $0.36$ & $0.47$ & $0.66$ & $0.91$ \\
\ldelim\{{3}{*}[Binary] & $15$ & & $0.56$ & $0.68$ & $0.83$ & $0.86$\\
& $511$ &  & $0.42$ & $0.66$ & $0.81$ & $0.88$ \\
& $1023$ & & $0.41$ & $0.65$ & $0.81$ & $0.88$ \\ 
\ldelim\{{3}{*}[Star] & $15$ & & $0.45$ & $0.64$ & $0.81$ & $0.85$\\
& $511$ & & $0.08$ & $0.28$ & $0.72$ & $0.90$ \\
& $1023$ & & $0.05$ & $0.22$ & $0.66$ & $0.89$ \\
\ldelim\{{3}{*}[Reverse binary] & $15$ & & $0.42$ & $0.59$ & $0.84$ & $0.88$ \\
& $511$ & & $0.21$ & $0.44$ & $0.73$ & $0.85$ \\
& $1023$ & & $0.17$ & $0.39$ & $0.72$ & $0.85$ \\
   \bottomrule
\end{tabular}
\end{footnotesize}
\end{table}

For each of these models, we try out our algorithm with various values of $n$ and $p$. The accuracy of the output is measured by the proportion of edges that are correctly identified. Since any spanning tree has $p-1$ edges, this is a reasonable measure of discrepancy. The average value of this proportion in $20$ simulations is then calculated for each case. The results for skeleton recovery are tabulated in Table \ref{maintable}. Since $p$ must be of the form $2^k -1$ for the binary tree example, we take all our $p$'s of this form. To be specific, we take $p=15$, $p=511$ and $p=1023$, and for each $p$, we consider $n= 50$, $n=100$, $n=200$ and $n=300$.

From Table \ref{maintable}, we see that for the linear tree and the binary tree, our skeleton recovery algorithm performs very well (more than $90\%$ edges correctly recovered) even for $n=100$ and $p = 1023$. The star tree and the reverse binary tree are harder to estimate, with the reverse binary tree requiring $n=200$ for a recovery rate greater than $90\%$ (for $p=1023$), and the star tree requiring $n=300$.

Table \ref{maintable2} shows the proportions of directed edges that are correctly identified by our algorithm for recovering directionalities. This is less accurate than the skeleton recovery, but still quite satisfactory. For example, with $n=300$, the rate of correct recovery  is greater than $85\%$ in all four types of trees even when $p=1023$.

Both tables show that the power of our algorithms depend mainly on the sample size, rather than the number of variables. Indeed, if one goes down the column for $n=300$, there seems to be no significant deterioration in performance as $p$ increases from $15$ to $1023$. This is in agreement with our theoretical error bounds, which are polynomially increasing in $p$ and exponentially decreasing in $n$. We conjecture that for $n$ of the order of $500$ or so, the algorithms will perform well even with extremely large $p$ (such as $20000$). Verifying this conjecture will be computationally rather expensive, which is what prevented us from doing it for this paper.

\subsection{Real data}
We now consider a real data example, from the R data repository `causaldata'~\cite{causaldata}, on the effect of mortgage subsidies on home ownership. The `mortgages' data  in the causaldata package contains data from Fetter~\cite{fetter13} on home ownership rates by men, focusing on whether they were born at the right time to be eligible for mortgage subsidies based on their military service. The data consists of 214,144 observations of the following 6 variables:
\begin{enumerate}
\item Birth state.
\item Quarter of birth
\item Race (white/nonwhite).
\item Whether veteran of either the Korean war or World War II.
\item Owns a home or not.
\item Quarter of birth centered on eligibility for mortgage subsidy (0+ = eligible).
\end{enumerate}
All variables other than birth state are numerical. Since our algorithm can only handle numerical variables, birth state was converted to a number ranging from 1 to 50. The directed tree produced by our algorithm is shown in Figure \ref{mortgagesfig}. The tree shows a causal structure that is acceptable to common sense: Birth state has a causal effect on race, which, in turn, has a causal effect on  home ownership. Veteran status and quarter of birth affect mortgage subsidy eligibility, which affects home ownership. 
\begin{figure}
\begin{center}
\includegraphics[scale=.5]{"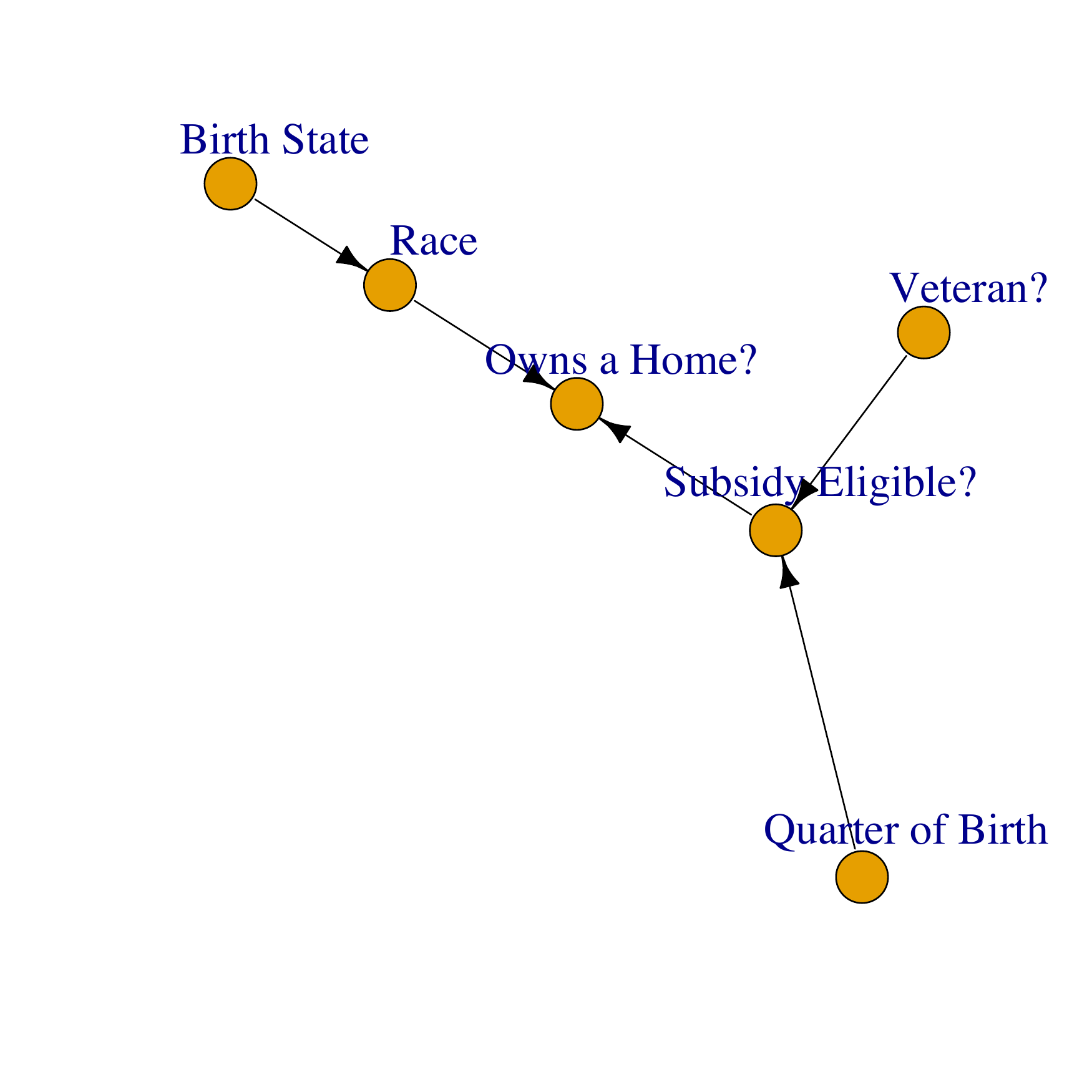"}
\caption{Estimated causal polytree for the mortgage data.\label{mortgagesfig}}
\end{center}
\end{figure}

\section{Proof of Theorem \ref{mainresult}}\label{proofsec}
The proof of Theorem \ref{mainresult} requires several steps. These are divided into subsections below, for the reader's convenience.
\subsection{A property of causal polytrees}\label{causal}
The following important property of causal polytrees will be used in many places. It is a well-known fact, but we give a proof for the sake of completeness.
\begin{prop}\label{causalprop}
Let $T$ be the skeleton of a causal polytree skeleton for a collection $X = (X_i)_{i\in V}$. Let $i,j,k$ be three vertices such that $j$ lies on the path connecting $i$ and $k$ in  $T$. Then at least one of the following is true: (a) $X_i$ and $X_k$ are conditionally independent given $X_j$, or (b) $X_i$ and $X_k$ are unconditionally independent.
\end{prop}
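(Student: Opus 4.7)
The plan is to apply the d-separation criterion for directed acyclic graphs, which is the standard tool for reading off conditional independence relations from the factorization \eqref{density}. Since $T$ is a tree, there is a unique undirected path $i = v_0, v_1, \ldots, v_m = k$ connecting $i$ and $k$ in $T$, and by hypothesis $j = v_r$ for some $1 \le r \le m-1$. Because the skeleton is a tree, this path is the only undirected path in $G$ between $i$ and $k$, so d-separation reduces entirely to blocking this single path, which depends only on the orientations of the two edges incident to $j$ on the path.

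First I would handle the case where $j$ is a non-collider on the path, i.e., at least one of the edges $(v_{r-1}, v_r)$, $(v_r, v_{r+1})$ points away from $j$ in $G$. In this case I claim that $\{j\}$ d-separates $i$ from $k$: the unique $i$-$k$ path has $j$ as a non-collider in the conditioning set, so it is blocked. The standard equivalence between d-separation and conditional independence for Bayesian networks factorizing as in \eqref{density} then yields conclusion (a). Next I would handle the complementary case, where $v_{r-1} \to v_r \leftarrow v_{r+1}$, so that $j$ is a collider. Here the empty set already d-separates $i$ from $k$: the path is blocked at $j$ because $j$ is a collider not in the (empty) conditioning set and has no descendants in it either. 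The same equivalence gives conclusion (b).

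The main obstacle, if one prefers a self-contained argument rather than invoking the d-separation theorem as a black box, is to derive the two conditional independence statements directly from \eqref{density}. The plan there would be to remove $j$ from $T$, partition $V \setminus \{j\}$ into the connected components of $T \setminus \{j\}$, and observe (from the tree structure) that the parent set of any vertex $v$ lies entirely within the component of $v$ together with possibly $\{j\}$. One can then group the factors of \eqref{density} by component and integrate out all variables outside $\{i, j, k\}$. In the non-collider case, the orientation at $j$ forces $X_j$ to appear as a conditioning variable in the marginalization of exactly the components on one or both sides, leading to a factorization of the form $f(x_i, x_j, x_k) = g(x_i, x_j)\, h(x_k \mid x_j)\, f_j(x_j)^{\pm 1}$, which is the definition of conditional independence given $X_j$. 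In the collider case, neither side uses $X_j$ as a conditioning variable until $j$'s own factor is introduced, and integrating out $x_j$ last yields a product form $f(x_i, x_k) = \phi(x_i)\psi(x_k)$, giving unconditional independence. This bookkeeping is the most tedious step; the rest of the argument is essentially combinatorial.
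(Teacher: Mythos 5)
Your proof is correct, but your primary route is genuinely different from the paper's. You reduce the statement to the soundness of d-separation for Bayesian networks (the global Markov property implied by the factorization \eqref{density}), observe that the tree skeleton leaves exactly one undirected path between $i$ and $k$, and then read off the two cases from the orientation of the two path edges at $j$: non-collider gives conditional independence given $X_j$, collider gives unconditional independence. This is clean and short, at the price of invoking the d-separation theorem as a black box; it also makes transparent \emph{which} of (a) or (b) holds in each configuration. The paper instead argues directly from \eqref{density}: it partitions $V\setminus\{j\}$ according to which neighbor of $j$ each vertex attaches through, notes that two variables can share a factor only if one is a parent of the other or they share a child, deduces that the components hanging off non-parent neighbors of $j$ are conditionally independent (given $X_j$) of the union of components hanging off parents of $j$, and in the remaining (collider) case integrates out the outside components and then $x_j$ to get unconditional independence. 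Your second, self-contained sketch is essentially this same bookkeeping, so the two proofs converge there; the only soft spot is the schematic ``$f_j(x_j)^{\pm 1}$'' factorization in the non-collider case, which glosses over the fact that the factor $f_j(x_j\mid\cdot)$ can couple several components through $j$'s other parents --- this is exactly the point the paper's $V_0$ versus $U$ decomposition is designed to handle --- but since your main argument stands on d-separation alone, this does not affect correctness.
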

\begin{proof}
Let the function $f$ displayed in equation \eqref{density} be the joint probability density of $X$ with respect to some product measure. Take any distinct $i,j,k\in V$ such that $j$ lies on the path connecting $i$ and $k$ in $T$. 

For each $a\in V\setminus\{j\}$, let $q(a)$ be the unique neighbor of $j$ which belongs to the path connecting $a$ and $j$ in $T$. Let $b_1,\ldots, b_m$ be the parents of $j$ (where $m$ may be $0$), and for each $1\le l\le m$, let $V_l$ be the set of all $a$ such that $q(a) = b_l$. Let $V_0$ be the set of all $a$ such that $q(a)$ is not a parent of $j$. Let $U :=  V_1\cup \cdots \cup V_m$. 

Now note that in the product displayed in \eqref{density}, $x_a$ and $x_b$ can be in the same factor only if either $a$ is a parent of $b$, or $b$ is a parent of $a$, and $a$ and $b$ have a common child. Thus, if $a\in V_0$ and $b\in U$, then $x_a$ and $x_b$ cannot appear in a common factor. This shows that the collections $(X_a)_{a\in V_0}$ and $(X_b)_{b\in U}$ are conditionally independent given $X_j$. In particular, if $i\in V_0$ and $k\in U$, or if $i\in U$ and $k\in V_0$, then $X_i$ and $X_k$ are conditionally independent given $X_j$. 

The only other possibility is that $i\in V_l$ and $k\in V_{l'}$ for some $1\le l\ne l'\le m$. We will show that in this case, $X_i$ and $X_k$ are unconditionally independent. First, take any $a\in V_0$ which has no children. Then we can integrate out $x_a$ in the product \eqref{density} and remove it from the graph. The resulting model is again described by a DAG whose skeleton is a forest, because we are simply removing a vertex and the edges incident to it. 

Continuing like this, we can iteratively remove all vertices in $V_0$, leaving us with only $U\cup \{j\}$. Integrating out $x_j$, we now see that the collections $(X_a)_{a\in V_l}$, $l=1,\ldots, m$  are independent. This shows that $X_i$ and $X_k$ are independent. 
\end{proof}

\subsection{Concentration of $\xi_n$}
Let $(X,Y)$ be a pair of real-valued random variables. Let $n\ge 2$ and let $(X_i,Y_i)$, $i=1,\ldots, n$ be i.i.d.~copies of $(X,Y)$. Let $\xi_n$ be the $\xi$-coefficient between $X_1,\ldots,X_n$ and $Y_1,\ldots,Y_n$. Recall the $\alpha$-coefficient defined in Section \ref{result}. The goal of this subsection is to prove the following result. 
\begin{prop}\label{concprop}
Suppose that $\alpha(Y)$ is bounded below by some $\delta>0$. Then, there are positive numbers $C_1(\delta)$, $C_2(\delta)$ and $C_3(\delta)$ depending only on $\delta$ such that for all $t\in [0, C_1(\delta)]$ and all $n$,
\begin{align*}
\pp(|\xi_n - \ee(\xi_n)|\ge t) \le C_2(\delta) e^{-C_3(\delta) nt^2}. 
\end{align*}
\end{prop}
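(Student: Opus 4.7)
The plan is to write $\xi_n = 1 - N_n/D_n$ with $N_n := n\sum_{i=1}^{n-1}|r_{i+1}-r_i|$ and $D_n := 2\sum_i l_i(n-l_i)$, and set $\hat N := N_n/n^3$, $\hat D := D_n/n^3$ (both in $[0,1]$). I will prove separate McDiarmid-type concentration of $\hat N$ and $\hat D$ around their means, establish a deterministic lower bound $\mathbb{E}\hat D \geq c_1(\delta) > 0$, and combine the two to control the ratio $\hat N/\hat D$.

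For the bounded-differences step, note first that $D_n$ is permutation-invariant and depends only on $(Y_1,\ldots,Y_n)$: replacing a single $Y_k$ shifts each $l_j$ by at most $1$ and each summand $l_j(n-l_j)$ by at most $n$, so $|D_n - D_n'| = O(n^2)$. For $N_n$, replacing $(X_k, Y_k)$ moves $k$ to one new slot in the sorting permutation $\pi$ and perturbs each rank $r_i$ by at most $1$; a direct tally shows that only $O(n)$ of the summands $|r_{i+1}-r_i|$ change, and each by $O(n)$, so $|N_n - N_n'| = O(n^2)$ as well. Thus $\hat N$ and $\hat D$ each have bounded differences of order $C/n$, and McDiarmid's inequality gives
\begin{align*}
\mathbb{P}(|\hat N - \mathbb{E}\hat N| \geq s) \leq 2 e^{-c n s^2}, \qquad
\mathbb{P}(|\hat D - \mathbb{E}\hat D| \geq s) \leq 2 e^{-c n s^2},
\end{align*}
for a universal $c>0$.

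For the lower bound on $\mathbb{E}\hat D$, an elementary expansion yields
\begin{align*}
\mathbb{E}[D_n]/n^3 = 2 \mathbb{P}(Y_1 < Y_2 \leq Y_3) + O(1/n) = 2\,\mathbb{E}[F(Y^-)(1-F(Y^-))] + O(1/n),
\end{align*}
for $Y_1, Y_2, Y_3$ iid and $F$ the CDF of $Y$. A case analysis on the law of $Y$ (separating continuous and atomic parts, and balancing dominant atoms against the rest) gives $\mathbb{P}(Y_1 < Y_2 \leq Y_3) \geq c_1(\delta) > 0$ whenever $\alpha(Y) \geq \delta$. Combined with the previous step, $\hat D \geq c_1(\delta)/2$ with probability at least $1 - 2e^{-c_2(\delta) n}$. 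On the good event $\{\hat D \geq c_1(\delta)/2,\; |\hat N - \mathbb{E}\hat N| \leq s,\; |\hat D - \mathbb{E}\hat D| \leq s\}$, the first-order ratio estimate
\begin{align*}
\left|\xi_n - \Bigl(1 - \tfrac{\mathbb{E}\hat N}{\mathbb{E}\hat D}\Bigr)\right| \leq \frac{s}{\hat D} + \frac{\mathbb{E}\hat N \cdot s}{\hat D \cdot \mathbb{E}\hat D} \leq C(\delta)\, s
\end{align*}
shows that $\xi_n$ concentrates around the deterministic center $L := 1 - \mathbb{E}\hat N/\mathbb{E}\hat D$ with the claimed tail. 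Since $\xi_n$ is bounded (with $\xi_n = 0$ by convention when $D_n = 0$), integrating the already-established tail gives $|\mathbb{E}\xi_n - L| = O(1/\sqrt n)$, which is absorbed into $C(\delta) s$ for $s \gtrsim 1/\sqrt n$; for smaller $s$ the target bound $C_2(\delta) e^{-C_3(\delta) n s^2}$ is vacuous. A final triangle inequality converts concentration around $L$ into concentration around $\mathbb{E}\xi_n$.

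The main obstacle is the quantitative lower bound $\mathbb{P}(Y_1 < Y_2 \leq Y_3) \geq c_1(\delta)$. The $\alpha$-coefficient is a two-point statistic ($\mathbb{P}(Y \neq Y')$), while the target is a three-point event with a strict-then-weak ordering, and distributions with one dominant atom and a small secondary mass produce three-point probabilities much smaller than $\alpha$ itself. A careful case split on the structure of $Y$'s law — continuous part versus atoms, and the size of the largest atom — is needed to extract a uniform positive lower bound depending only on $\delta$.
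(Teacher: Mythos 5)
Your architecture is sound and parallels the paper's at a high level (concentrate numerator and denominator separately, lower-bound the denominator by a function of $\alpha(Y)$, ratio estimate, then integrate the tail to pass from a deterministic center to $\ee(\xi_n)$), but the concentration mechanism is genuinely different: you apply McDiarmid directly to the raw statistics $N_n$ and $D_n$ from \eqref{xidef}, whereas the paper imports the decomposition $|\xi_n - Q_n/S_n|\le 1/(2nS_n)$ and the concentration of $Q_n$ from the supplementary material of \cite{chatterjee21}, and handles the denominator via Hoeffding plus Dvoretzky--Kiefer--Wolfowitz. Your route is more self-contained, at the cost of the combinatorial bounded-differences tally for $N_n$; note that your stated tally (``$O(n)$ of the summands change, each by $O(n)$'') would give $|N_n-N_n'|=O(n^3)$, not $O(n^2)$, since $N_n$ carries an outer factor of $n$. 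The correct accounting is that all $n-1$ summands $|r_{i+1}-r_i|$ change by $O(1)$ (each rank shifts by at most one) while only $O(1)$ of them --- those adjacent to the relocated point --- change by $O(n)$, so the inner sum changes by $O(n)$ and $N_n$ by $O(n^2)$. The conclusion you need is true, but the reasoning as written does not deliver it.

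The genuine gap is the step you yourself flag as the main obstacle: the uniform lower bound $\pp(Y_1<Y_2\le Y_3)\ge c_1(\delta)$ when $\alpha(Y)\ge\delta$. This is not a routine case analysis to be waved at --- it is precisely Lemma \ref{betalmm} of the paper, since $\pp(Y_1<Y_2\le Y_3)=\ee[F(Y^-)(1-F(Y^-))]=\int G(t)(1-G(t))\,d\mu(t)=\beta$ with $G(t)=\pp(Y\ge t)$, and the paper proves $\beta\ge\alpha^2/32$. Your two-atom example correctly shows the quadratic dependence on $\alpha$ is unavoidable, but you do not produce the bound. The paper's proof is short and avoids a case split on the structure of the law: it first shows $\int G\,d\mu = 1-\alpha/2$ via $\pp(Y'\ge Y)=\tfrac12(1+\pp(Y=Y'))$, then partitions $\rr$ into $\{G>1-\ep\}$, $\{\ep\le G\le 1-\ep\}$, $\{G<\ep\}$, bounds the $\mu$-measures of the first and third sets, and optimizes over $\ep=\alpha/8$. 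You should either reproduce an argument of this kind or carry out your atomic/continuous decomposition in full; without it the proposition is not proved. The remaining steps (ratio estimate on the good event, handling of the bad event using $|\xi_n|\le 1+n^2$, absorption of the $O(1/\sqrt n)$ centering error, and triviality of the bound for small $t$ or small $n$) are all correct and match the paper's closing argument.
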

This immediately yields the following corollary.
\begin{cor}\label{xicor}
If $Y$ is not a constant, then as $n\to \infty$, $\ee(\xi_n)\to \xi$, where $\xi$ is the $\xi$-correlation between $X$ and $Y$ displayed in equation \eqref{xicordef}.
\end{cor}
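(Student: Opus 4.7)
The plan is to derive Corollary~\ref{xicor} directly from Proposition~\ref{concprop} combined with the in-probability convergence $\xi_n \to \xi$ established in \cite{chatterjee21}. The one-line strategy: Proposition~\ref{concprop} forces $\xi_n$ to concentrate around the deterministic value $\ee(\xi_n)$, while \cite{chatterjee21} forces $\xi_n$ to concentrate around the deterministic value $\xi$; a random sequence cannot concentrate at two different real limits, so $\ee(\xi_n)$ must tend to $\xi$.

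More concretely, since $Y$ is non-constant, $\alpha(Y) = \pp(Y \ne Y') > 0$, so Proposition~\ref{concprop} applies with some $\delta \in (0, \alpha(Y)]$. Fix any $t \in (0, C_1(\delta)]$. For every $n$ we have
\[
\pp\bigl(|\xi_n - \ee(\xi_n)| \ge t\bigr) \le C_2(\delta)\, e^{-C_3(\delta)\, n t^2},
\]
and the right-hand side tends to $0$ as $n \to \infty$. Since $t$ is arbitrary in this range, $\xi_n - \ee(\xi_n) \to 0$ in probability. On the other hand, the main result of \cite{chatterjee21} gives $\xi_n \to \xi$ in probability whenever $Y$ is non-constant. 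Combining the two,
\[
\ee(\xi_n) - \xi \;=\; \bigl(\ee(\xi_n) - \xi_n\bigr) + \bigl(\xi_n - \xi\bigr) \;\xrightarrow{\;\pp\;}\; 0.
\]
The left-hand side is a deterministic sequence of real numbers, so in-probability convergence to $0$ is the same as ordinary convergence to $0$. This yields $\ee(\xi_n) \to \xi$, as desired.

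There is essentially no hard step here — the corollary is genuinely ``immediate'' once Proposition~\ref{concprop} is in hand. The only minor point to check is that the concentration bound is available, which requires a positive lower bound on $\alpha(Y)$; this is given for free by the hypothesis that $Y$ is non-constant, via the choice $\delta = \alpha(Y)$. Notice also that we never need to restrict to a subsequence or argue via uniform integrability, precisely because Proposition~\ref{concprop} controls $\xi_n - \ee(\xi_n)$ directly rather than only bounding $\xi_n$.
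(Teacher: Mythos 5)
Your argument is correct and is essentially identical to the paper's own proof: both deduce $\xi_n - \ee(\xi_n)\to 0$ in probability from Proposition~\ref{concprop}, combine this with the in-probability convergence $\xi_n\to\xi$ from \cite{chatterjee21}, and conclude via the fact that a deterministic sequence converging in probability converges in the usual sense. Your write-up merely spells out the final step in more detail than the paper does.
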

\begin{proof}
By \cite[Theorem 1.1]{chatterjee21}, $\xi_n \to \xi$ in probability as $n\to \infty$. By Proposition~\ref{concprop}, $\xi_n - \ee(\xi_n) \to 0$ in probability. The claim is proved by combining these observations. 
\end{proof}
To prove Proposition \ref{concprop}, we need two lemmas. The first one is the following. 
\begin{lmm}\label{betalmm}
Let $\alpha := \alpha(Y)$ and let $\mu$ be the law of $Y$. For each $t\in \rr$, let $G(t) := \pp(Y\ge t)$. Let
\begin{align*}
\beta := \int G(t)(1-G(t)) d\mu(t). 
\end{align*}
Then $\beta \ge \alpha^2/32$.
\end{lmm}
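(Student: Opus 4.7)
The plan is to rewrite $\alpha = \P(Y_1 \ne Y_2)$ and $\beta = \E[G(Y)(1-G(Y))] = \P(Y_3 < Y_1 \le Y_2)$ using iid copies $Y_1, Y_2, Y_3$ of $Y$, and then split on the largest atom mass $p^* := \max_i \P(Y = y_i)$ (set to $0$ if $Y$ has no atoms).

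\textbf{Case I: $p^* > 1/3$.} Let $y^*$ achieve $p^*$, set $\gamma = 1 - p^* < 2/3$, and let $\gamma_L, \gamma_R \ge 0$ be the $\mu$-masses of $(-\infty, y^*)$ and $(y^*, \infty)$, so $\gamma_L + \gamma_R = \gamma$. The atom at $y^*$ contributes $p^*(1-\gamma_L)\gamma_L$ to $\beta$, since $G(y^*) = 1 - \gamma_L$. For the integral over $\{y > y^*\}$ the key observation is the scaling identity $G(y) = \gamma_R \tilde G_R(y)$ with $\tilde G_R(y) := \P(Y \ge y \mid Y > y^*)$; writing $\tilde\mu_R$ for the conditional law on $\{y > y^*\}$, the estimates $\int \tilde G_R \, d\tilde\mu_R = 1 - \tilde\alpha_R/2 \ge 1/2$ and $\int \tilde G_R^2 \, d\tilde\mu_R \le \int \tilde G_R \, d\tilde\mu_R$ give $\int_{y > y^*} G(1-G) \, d\mu \ge \gamma_R^2(1-\gamma_R)/2$. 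Dropping the nonnegative contribution from $\{y < y^*\}$ and using $(1-\gamma_R) \ge 1-\gamma$ together with $p^* = 1-\gamma$ yields
\[
\beta \ge (1-\gamma)\bigl[(1-\gamma_L)\gamma_L + (\gamma-\gamma_L)^2/2\bigr].
\]
The bracketed expression $g(\gamma_L)$ is a concave quadratic with $g''(\gamma_L) = -1$, so its minimum on $[0,\gamma]$ occurs at an endpoint; since $g(0) = \gamma^2/2$ and $g(\gamma) = \gamma(1-\gamma)$ and $\gamma^2/2 \le \gamma(1-\gamma)$ for $\gamma \le 2/3$, the minimum is $\gamma^2/2$ at $\gamma_L = 0$. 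Hence $\beta \ge \gamma^2(1-\gamma)/2 \ge \gamma^2/6$, and combined with $\alpha \le 1 - (p^*)^2 \le 2\gamma$ this gives $\alpha^2/32 \le \gamma^2/8 \le \beta$.

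\textbf{Case II: $p^* \le 1/3$.} Then $\alpha = 1 - \sum_i p_i^2 \ge 1 - p^* \sum_i p_i \ge 1 - p^* \ge 2/3$. I would use the crude bound $\beta \ge \P(Y_3 < Y_1 < Y_2) = D/6$, where $D := \P(Y_1, Y_2, Y_3 \text{ all distinct})$: by exchangeability, conditional on the all-distinct event each of the six orderings has probability $1/6$. Inclusion--exclusion gives $D = 3\alpha - 2 + 2 \sum_i p_i^3$, and the Cauchy--Schwarz inequality $(\sum_i p_i^2)^2 \le (\sum_i p_i)(\sum_i p_i^3) \le \sum_i p_i^3$ yields $\sum_i p_i^3 \ge (1-\alpha)^2$, so $D \ge 2\alpha^2 - \alpha$. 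Hence $\beta \ge \alpha(2\alpha - 1)/6$, and the required inequality $\alpha(2\alpha-1)/6 \ge \alpha^2/32$ reduces to $\alpha \ge 16/29$, which is satisfied since $\alpha \ge 2/3$.

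The main obstacle is the scaling identity for the integral over $\{y > y^*\}$ in Case I; once that is recognized, the rest is elementary estimates and a one-variable concave minimization, while Case II is essentially a direct consequence of inclusion--exclusion.
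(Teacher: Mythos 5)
Your proof is correct, but it takes a genuinely different route from the paper. The paper gives a single unified argument: starting from the identity $\int G\,d\mu = 1-\alpha/2$ (obtained from $\P(Y'\ge Y)$ for an independent copy $Y'$), it partitions $\rr$ into the level sets $\{G>1-\ep\}$, $\{\ep\le G\le 1-\ep\}$, $\{G<\ep\}$, bounds the $\mu$-measure of the two extreme sets by $\frac{1}{1-\ep}(1-\frac{\alpha}{2})$ and $\ep$ respectively, concludes $\beta\ge \frac{\ep\alpha}{2}-2\ep^2$ on the middle set where $G(1-G)\ge\ep(1-\ep)$, and optimizes at $\ep=\alpha/8$ to land exactly on $\alpha^2/32$. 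You instead split on the largest atom mass $p^*$: for $p^*>1/3$ you account explicitly for the atom's contribution and handle the right tail by a scaling/self-similarity observation $G=\gamma_R\tilde G_R$ (note that your estimate $\int\tilde G_R\,d\tilde\mu_R\ge 1/2$ is precisely the paper's identity applied to the conditional law, so the two proofs share that ingredient); for $p^*\le 1/3$ you use the three-copy representation $\beta\ge\P(Y_3<Y_1<Y_2)=D/6$ together with inclusion--exclusion and the power-sum inequality $\sum p_i^3\ge(\sum p_i^2)^2$. Both arguments are elementary and both deliver the constant $1/32$ (yours with some slack in each case); the paper's is shorter and avoids the case analysis, while yours is more structural in that it isolates exactly where the mass of $\beta$ comes from (the dominant atom versus the genuinely spread-out part) and gives a combinatorial interpretation of $\beta$ as an ordering probability. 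All the steps I checked --- the concavity minimization over $\gamma_L\in[0,\gamma]$, the bound $\alpha\le 1-(p^*)^2\le 2\gamma$, the computation $D=3\alpha-2+2\sum p_i^3\ge 2\alpha^2-\alpha$, and the final reduction to $\alpha\ge 16/29$ --- are sound, including the degenerate cases $\gamma_R=0$ and atomless $Y$.
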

\begin{proof}
Let $Y'$ be an independent copy of $Y$. Then note that 
\[
\pp(Y>Y') = \pp(Y<Y'),
\] 
and thus,
\[
\pp(Y'> Y) = \frac{1}{2}(\pp(Y' >Y) + \pp(Y' < Y)) = \frac{1}{2}(1-\pp(Y'=Y)).
\]
This gives
\[
\pp(Y'\ge Y) = \pp(Y'>Y) + \pp(Y'=Y) = \frac{1}{2}(1+\pp(Y'=Y)) = 1 - \frac{\alpha}{2}.
\]
But  $\pp(Y'\ge Y) = \ee(G(Y))$. Thus, 
\begin{align}\label{gmu}
\int G(t) d\mu(t) = 1-\frac{\alpha}{2}. 
\end{align} 
Take any $\ep\in (0,1/2)$. Define three sets 
\[
A_\ep := \{t: G(t) > 1-\ep\}, \ B_\ep := \{t: \ep \le G(t) \le  1-\ep\}, \  C_\ep := \{t: G(t) < \ep\}.
\]
Then 
\begin{align*}
\int G(t) d\mu(t) \ge (1-\ep) \mu(A_\ep).
\end{align*}
Combining this with \eqref{gmu},  we get
\begin{align*}
\mu(A_\ep) \le \frac{1}{1-\ep}\biggl(1-\frac{\alpha}{2}\biggr). 
\end{align*}
On the other hand, since $G$ is a non-increasing left-continuous function, the set $C_\ep$ is an interval of the form $(t,\infty)$. Thus,
\begin{align*}
\mu(C_\ep) &= \mu((t,\infty)) = \lim_{s\downarrow t} G(s) \le \ep.
\end{align*}
Combining the last two displays, we get
\begin{align*}
\mu(B_\ep) = 1- \mu(A_\ep)-\mu(B_\ep) \ge  1 - \frac{1}{1-\ep}\biggl(1-\frac{\alpha}{2}\biggr) - \ep.
\end{align*}
But $G(t)(1-G(t))\ge \ep(1-\ep)$ on $B_\ep$ since $\ep\in (0,1/2)$. Thus,
\begin{align*}
\beta &\ge \ep(1-\ep)\mu(B_\ep) \\
&\ge \ep(1-\ep) - \ep\biggl(1-\frac{\alpha}{2}\biggr) - \ep^2(1-\ep)\\
&= \frac{\ep\alpha}{2} -2\ep^2 + \ep^3\ge \frac{\ep\alpha}{2}-2\ep^2.
\end{align*}
Choosing $\ep = \alpha/8$ completes the proof.
\end{proof}
The second lemma we need is the following simple upper bound on $\xi_n$.
\begin{lmm}\label{xibound}
For any $n$ and any realization of the data, $|\xi_n|\le 1+n^2$.
\end{lmm}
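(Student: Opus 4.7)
The bound is very loose, so I will just produce an explicit estimate on the numerator and denominator of~\eqref{xidef}. The plan is to split into two cases depending on whether all the $y_i$'s coincide.

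If all $y_i$ are equal, the denominator of the ratio in \eqref{xidef} vanishes; by the convention adopted after \eqref{xidef}, $\xi_n = 0$ and the bound is trivial. Otherwise, I bound the two pieces crudely. For the numerator, since each $r_i \in \{1,\ldots,n\}$ we have $|r_{i+1}-r_i| \le n-1 \le n$, so
\[
n\sum_{i=1}^{n-1}|r_{i+1}-r_i| \le n \cdot (n-1)\cdot n \le n^3.
\]
For the denominator, the assumption that not all $y_i$ coincide means that some index~$i$ has $1\le l_i\le n-1$, so that $l_i(n-l_i)\ge n-1$, and hence
\[
2\sum_{i=1}^n l_i(n-l_i) \ge 2(n-1).
\]

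Combining these gives
\[
1 - \xi_n \le \frac{n^3}{2(n-1)} \le n^2
\]
for $n\ge 2$ (since $n/(2(n-1))\le 1$ in that range), so $\xi_n\ge 1-n^2$. Since the ratio in \eqref{xidef} is manifestly nonnegative we also have $\xi_n\le 1$. Together these yield $|\xi_n|\le n^2 \le 1+n^2$.

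There is no real obstacle here beyond being careful with the two degenerate cases ($n=2$ and all $y_i$ equal); everything else is arithmetic on the definition.
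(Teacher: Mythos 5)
Your proof is correct and follows essentially the same route as the paper: handle the all-equal case via the convention $\xi_n=0$, bound the numerator by $n\cdot(n-1)\cdot n$ using $|r_{i+1}-r_i|\le n$, and bound the denominator below using that some $l_i$ lies strictly between $0$ and $n$. (Your lower bound $l_i(n-l_i)\ge n-1$ is in fact the careful version of the paper's claimed $l_i(n-l_i)\ge n$; either suffices for the stated inequality.)
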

\begin{proof}
If $Y_1=Y_2=\cdots=Y_n$ in a particular realization of the data, then $\xi_n =0$ by the convention introduced in Section \ref{algo}. So, let us assume that not all the $Y_i$'s are equal. Recall the $r_i$'s and $l_i$'s used in the definition of $\xi_n$ in  Section \ref{algo}. Since the $Y_i$'s are not all equal, it is easy to see that there is at least one $l_i$ which lies strictly between $0$ and $n$. Thus, for this $i$, $l_i(n-l_i)\ge n$. Since $|r_{i+1}-r_i|\le n$ for each $i$, inspecting the formula \eqref{xidef} now shows that $|\xi_n|\le 1+n^2$.
\end{proof}
We are now ready to prove Proposition \ref{concprop}.
\begin{proof}[Proof of Proposition \ref{concprop}]
Define
\[
G_n(t) := \frac{1}{n}\sum_{i=1}^n 1_{\{Y_i\ge t\}}. 
\]
Define 
\begin{align*}
S_n := \frac{1}{n}\sum_{i=1}^n G_n(Y_i)(1-G_n(Y_i)), \ \ \ S_n' := \frac{1}{n}\sum_{i=1}^n G(Y_i)(1-G(Y_i)). 
\end{align*}
Recall the number $\beta$ defined in Lemma \ref{betalmm}. Note that $\ee(S_n')=\beta$ and $S_n'$ is an average of i.i.d.~random variables taking values in $[0,1]$. Therefore, by Hoeffding's inequality~\cite{hoeffding63},
\begin{align*}
\pp(|S_n' - \beta|\ge t) \le 2e^{-2nt^2}
\end{align*}
for any $t\ge 0$. Let $\Delta_n := \sup_{t\in \rr} |G_n(t)-G(t)|$. By the Dvoretzky--Kiefer--Wolfowitz inequality~\cite{dkw56, massart90},
\[
\pp(\Delta_n \ge t) \le 2e^{-2nt^2}
\]
for any $t\ge 0$. Now, by the triangle inequality, $|S_n-S_n'|\le 2\Delta_n$. Combining all of this, we get that for any $t\ge 0$, 
\begin{align*}
\pp(|S_n -\beta|\ge t) &\le \pp( |S_n - S_n'|\ge t/2) + \pp(|S_n'-\beta|\ge t/2)\\
&\le \pp(\Delta_n \ge t/4) + \pp(|S_n'-\beta|\ge t/2)\le 4e^{-nt^2/8}. 
\end{align*}
From the proof of \cite[Theorem 1.1]{chatterjee21} in the supplementary materials of \cite{chatterjee21}, recall that 
\begin{align}\label{qnsn}
\biggl|\frac{Q_n}{S_n} - \xi_n\biggr| &\le \frac{1}{2nS_n},
\end{align}
where $Q_n$ is a random variable taking values in $[-1,1]$, which has a somewhat complicated definition that is unnecessary to state here. The only thing we need to know about $Q_n$ is that for all $t\ge 0$,
\[
\pp(|Q_n - \ee(Q_n)|\ge t) \le 2e^{-Cnt^2},
\]
where $C$ is a positive universal constant. We get this from Lemma A.11 in the supplementary materials of \cite{chatterjee21}. Now, if $|Q_n-\ee(Q_n)|< t$ and $|S_n - \beta|< t$ for some $t\in [0,\beta/2]$, then $S_n> \beta/2$, and hence 
\begin{align*}
\biggl|\frac{Q_n}{S_n} - \frac{\ee(Q_n)}{\beta}\biggr| &= \frac{|Q_n \beta - S_n \ee(Q_n)|}{S_n \beta}\\
&\le \frac{2}{\beta^2}(|Q_n-\ee(Q_n)|\beta + \ee(Q_n)|S_n-\beta|)\\
&< \frac{2(\beta+1)t}{\beta^2},
\end{align*}
and consequently, by \eqref{qnsn}, 
\begin{align*}
\biggl|\xi_n - \frac{\ee(Q_n)}{\beta}\biggr| &< \frac{2(\beta+1)t}{\beta^2} + \frac{1}{2nS_n}\\
&\le \frac{2(\beta+1)t}{\beta^2} + \frac{1}{n\beta}.
\end{align*}
Let $a_n := \ee(Q_n)/\beta$. From the above, we see that for all $t\in [0,\beta/2]$,
\begin{align}\label{xian}
\pp\biggl(|\xi_n - a_n|\ge  \frac{2(\beta+1)t}{\beta^2} + \frac{1}{n\beta}\biggr) &\le 6e^{-Cnt^2}
\end{align}
where $C$ is a positive universal constant. By Lemma \ref{betalmm} and Lemma \ref{xibound}, it is easy to see using the above inequality that 
\begin{align*}
|\ee(\xi_n)-a_n|\le \ee|\xi_n-a_n| = \int_0^\infty\pp(|\xi_n-a_n|\ge t) dt &\le \frac{C(\delta)}{\sqrt{n}},
\end{align*}
where $C(\delta)$ is a positive constant that depends only on $\delta$. 
Thus, again using \eqref{xian}, we get that there are positive numbers $C_1(\delta)$, $C_2(\delta)$ and $C_3(\delta)$ depending only on $\delta$ such that for all $t\in [0, C_1(\delta)]$ and all $n$,
\begin{align*}
\pp\biggl(|\xi_n - \ee(\xi_n)|\ge t + \frac{C_2(\delta)}{\sqrt{n}} \biggr) \le 6 e^{-C_3(\delta) nt^2}. 
\end{align*}
Now, if $t\ge 2C_2(\delta)/\sqrt{n}$ and $t\le C_1(\delta)$, the above inequality implies that
\begin{align*}
\pp(|\xi_n - \ee(\xi_n)|\ge t) &\le \pp\biggl(|\xi_n - \ee(\xi_n)|\ge \frac{t}{2} + \frac{C_2(\delta)}{\sqrt{n}} \biggr) \\
&\le 6 e^{-C_3(\delta) nt^2/4}. 
\end{align*}
On the other hand, if $t< 2C_2(\delta)/\sqrt{n}$, then 
\begin{align*}
\pp(|\xi_n - \ee(\xi_n)|\ge t) &\le 1 \le e^{4C_3(\delta)C_2(\delta)^2} e^{-C_3(\delta)nt^2}. 
\end{align*}
Combining the last two displays completes the proof.
\end{proof}
\subsection{Data processing inequality for maximal correlation}
In this subsection we prove that the maximal correlation coefficient satisfies an inequality that is also satisfied by mutual information, going by the name `data processing inequality'.
\begin{prop}\label{dpimaxcor}
Let $X$, $Y$ and $Z$ be three random variables such that $Y$ and $Z$ are conditionally independent given $X$. Then $R(Z,Y)\le R(X,Y)$. 
\end{prop}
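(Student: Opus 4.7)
The plan is to reduce correlations involving $f(Z)$ to correlations involving the conditional expectation $h(X) := \ee(f(Z)\mid X)$, and then invoke the definition of $R(X,Y)$ directly. Specifically, I will take arbitrary measurable $f, g$ with $f(Z)$ and $g(Y)$ having finite nonzero variance, and show that $|\operatorname{Corr}(f(Z), g(Y))| \le R(X,Y)$; taking the supremum over $f, g$ then yields $R(Z,Y) \le R(X,Y)$.

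The first key step uses conditional independence: since $Y$ and $Z$ are conditionally independent given $X$, I have $\ee(f(Z)g(Y)\mid X) = \ee(f(Z)\mid X) \ee(g(Y)\mid X) = h(X)\ee(g(Y)\mid X)$ almost surely. Taking expectations and applying the tower property yields $\ee(f(Z)g(Y)) = \ee(h(X)\ee(g(Y)\mid X)) = \ee(h(X)g(Y))$. Since also $\ee(f(Z)) = \ee(h(X))$, this gives the identity
\[
\cov(f(Z), g(Y)) = \cov(h(X), g(Y)).
\]
The second key step is Jensen's inequality for conditional expectation, which gives $\var(h(X)) = \var(\ee(f(Z)\mid X)) \le \var(f(Z))$.

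Combining these two facts,
\[
|\operatorname{Corr}(f(Z), g(Y))| = \frac{|\cov(h(X), g(Y))|}{\sqrt{\var(f(Z))\var(g(Y))}} \le \frac{|\cov(h(X), g(Y))|}{\sqrt{\var(h(X))\var(g(Y))}} = |\operatorname{Corr}(h(X), g(Y))| \le R(X,Y),
\]
where the last inequality uses that $h$ is a measurable function of $X$ and $g(Y)$ has nonzero finite variance, so the pair $(h(X), g(Y))$ is admissible in the definition of $R(X,Y)$.

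The only minor obstacle is the degenerate case $\var(h(X)) = 0$, which would make the intermediate ratio undefined. But in that case $h(X)$ is almost surely constant, so $\cov(h(X), g(Y)) = 0$, and the identity from the first step forces $\cov(f(Z), g(Y)) = 0$, so $\operatorname{Corr}(f(Z), g(Y)) = 0 \le R(X,Y)$ trivially. A brief check is also needed that $h(X)$ is well-defined and square integrable, which follows from $\var(f(Z)) < \infty$. Taking the supremum over all admissible $f, g$ completes the proof.
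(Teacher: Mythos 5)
Your proof is correct and follows essentially the same route as the paper's: reduce $f(Z)$ to $h(X)=\ee(f(Z)\mid X)$ via the conditional independence identity $\ee(f(Z)g(Y))=\ee(h(X)g(Y))$, then use $\var(h(X))\le\var(f(Z))$ together with the definition of $R(X,Y)$. The only cosmetic difference is that the paper normalizes $f,g$ to mean zero and unit variance up front (and disposes of the constant-$Y$, constant-$Z$, constant-$X$ cases explicitly), whereas you carry the variances through and treat $\var(h(X))=0$ separately; the content is identical.
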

\begin{proof}
If $Y$ is a constant or $Z$ is a constant, then $R(Z,Y)=0$ by definition, and so there is nothing to prove. So, let us assume that both $Y$ and $Z$ are non-constant. In this case, if $X$ is a constant, then the conditional independence of $Y$ and $Z$ given $X$ implies that $Y$ and $Z$ are unconditionally independent, and hence $R(Z,Y)=0$. So, again, there is nothing to prove. Thus, let us assume that $X$ is also non-constant. 

Take any $f$ and $g$ such that $\var(f(Z))$ and $\var(g(Y))$ are both in $(0,\infty)$. Without loss of generality, $\ee(f(Z))=\ee(g(Y))=0$ and $\var(f(Z)) = \var(g(Y)) = 1$. Then by the conditional independence of $Y$ and $Z$ given $X$, we have
\begin{align*}
\mathrm{Corr}(f(Z), g(Y)) = \ee(f(Z)g(Y)) = \ee(h(X)g(Y)),
\end{align*}
where $h(X) := \ee(f(Z)|X)$. But $\ee(h(X)) = \ee(f(Z)) = 0$, and so
\[
\ee(h(X) g(Y)) \le R(X,Y) \sqrt{\ee(h(X)^2) \ee(g(Y)^2)} = R(X,Y) \sqrt{\ee(h(X)^2)}. 
\]
To complete the proof, note that $\ee(h(X)^2)\le \ee(f(Z)^2) = 1$. 
\end{proof}
\subsection{Data processing inequality for $\xi$-correlation}
In this section we will show that the $\xi$-correlation also satisfies a data processing inequality, although it is a `one-sided' version of the inequality, since the $\xi$-correlation is not symmetric. More importantly, the result makes a connection between the $\xi$-correlation and the maximal correlation, which will be important in the proof of Theorem \ref{mainresult}.
\begin{prop}\label{xidpi}
Let $X$, $Y$ and $Z$ be three random variables such that $Y$ and $Z$ are conditionally independent given $X$. Then $\xi(Z,Y)\le R(Z,X)^2\xi(X,Y)$. In particular, $\xi(Z,Y)\le \xi(X,Y)$. 
\end{prop}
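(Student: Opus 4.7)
The plan is to argue pointwise in $t$, and then integrate. Observe that the denominators in the definition \eqref{xicordef} of $\xi(Z,Y)$ and $\xi(X,Y)$ are identical (both equal $\int \var(1_{\{Y\ge t\}}) d\mu(t)$), so it suffices to compare the integrands in the numerators, namely $\var(\ee(1_{\{Y\ge t\}}|Z))$ and $\var(\ee(1_{\{Y\ge t\}}|X))$. If $Y$ is constant, both $\xi$-correlations are $0$ by convention and the claim is trivial, so assume $Y$ is non-constant.

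Fix $t$ and set $\phi(X) := \ee(1_{\{Y\ge t\}}|X)$ and $\psi(Z) := \ee(1_{\{Y\ge t\}}|Z)$. The first key step uses the conditional independence of $Y$ and $Z$ given $X$: by the tower property,
\begin{align*}
\psi(Z) = \ee(1_{\{Y\ge t\}}|Z) = \ee(\ee(1_{\{Y\ge t\}}|X,Z)\,|\,Z) = \ee(\ee(1_{\{Y\ge t\}}|X)\,|\,Z) = \ee(\phi(X)|Z),
\end{align*}
where the middle equality is exactly conditional independence. So $\psi(Z)$ is the conditional expectation of $\phi(X)$ given $Z$.

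The second key step is to show that for any square-integrable $\phi$,
\[
\var(\ee(\phi(X)|Z)) \le R(Z,X)^2 \var(\phi(X)).
\]
If $\var(\phi(X))=0$ the claim is trivial, so centering and normalizing, we may assume $\ee(\phi(X))=0$. Then $\ee(\psi(Z))=0$ as well, and
\[
\var(\psi(Z)) = \ee(\psi(Z)^2) = \ee(\psi(Z)\phi(X)) = \cov(\psi(Z),\phi(X)),
\]
using that $\psi(Z)=\ee(\phi(X)|Z)$. If $\var(\psi(Z))>0$, the definition of maximal correlation gives
\[
\cov(\psi(Z),\phi(X)) \le R(Z,X)\sqrt{\var(\psi(Z))\var(\phi(X))},
\]
and dividing through by $\sqrt{\var(\psi(Z))}$ and squaring yields the desired bound (the case $\var(\psi(Z))=0$ is trivial).

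Combining these two steps gives the pointwise estimate
\[
\var(\ee(1_{\{Y\ge t\}}|Z)) \le R(Z,X)^2 \var(\ee(1_{\{Y\ge t\}}|X))
\]
for every $t$. Integrating both sides against $d\mu(t)$ and dividing by the common denominator yields the main inequality $\xi(Z,Y)\le R(Z,X)^2 \xi(X,Y)$. The "in particular" statement then follows since $R(Z,X)\le 1$. No part of this should be especially thorny; the only mildly delicate point is making sure the degenerate cases (where some of $X,Y,Z$ are constant, or the variances vanish) are handled cleanly so one never divides by zero, but these all reduce to trivial verifications using the conventions already introduced for $R$ and $\xi$.
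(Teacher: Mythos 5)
Your proof is correct and follows essentially the same route as the paper's: use conditional independence to write $\ee(1_{\{Y\ge t\}}|Z)$ as the conditional expectation of $\ee(1_{\{Y\ge t\}}|X)$ given $Z$, bound its variance by $R(Z,X)^2$ times the variance of $\ee(1_{\{Y\ge t\}}|X)$ via the defining inequality for maximal correlation applied to the centered variables, and integrate over $t$. The only (welcome) difference is that you spell out the degenerate cases where the variances vanish, which the paper leaves implicit.
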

\begin{proof}
If $Y$ is a constant, then both $\xi(Z,Y)$ and $\xi(X,Y)$ are zero (according to the convention adopted in Section \ref{result}), so there is nothing to prove. Thus, let us assume that $Y$ is not a constant. 
Take any $t\in \rr$. Let $f(X) := \pp(Y\ge t|X)$ and $g(Z) := \pp(Y\ge t|Z)$. By the conditional independence of $Y$ and $Z$ given $X$, $g(Z) = \ee(f(X)|Z)$. Let $\tf(X) := f(X)-\ee(f(X))$ and $\tg(Z) := g(Z)-\ee(g(Z))$. Since $\ee(f(X))=\ee(g(Z))$, we have $\ee(\tf(X)|Z) = \tg(Z)$. Thus, 
\begin{align*}
\var(g(Z)) &= \ee(\tg(Z)^2) \\
&= \ee(\tg(Z) \ee(\tf(X)|Z)) \\
&= \ee(\tg(Z)\tf(X)) \le R(Z,X) \sqrt{\ee(\tg(Z)^2) \ee(\tf(X)^2)}.
\end{align*}
Rearranging this inequality, we get
\[
\var(g(Z)) \le R(Z,X)^2 \var(f(X)).
\]
Since $t$ is arbitrary, this implies that
\begin{align*}
\xi(Z,Y) &= \frac{\int \var(\pp(Y\ge t|Z)) d\mu(t)}{\int \var(1_{\{Y\ge t\}}) d\mu(t) } \\
&\le \frac{\int R(Z,X)^2\var(\pp(Y\ge t|X)) d\mu(t)}{\int \var(1_{\{Y\ge t\}}) d\mu(t) } = R(Z,X)^2 \xi(X,Y),
\end{align*}
which completes the proof. 
\end{proof}
\subsection{A fact about maximal spanning trees}
The final bit of preliminary factoid that we need is the following result about maximal weighted spanning trees.
\begin{lmm}\label{mstlmm}
Let $T$ be a maximal weighted spanning tree of a connected weighted graph $G = (V,E)$. If an edge $e$ in $G$ does not belong to $T$, then every edge in the path connecting the endpoints of $e$ in $T$ must have weight greater than or equal to the weight of $e$.
\end{lmm}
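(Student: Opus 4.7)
The plan is to prove this by the standard swap argument that underlies the cycle property of maximum spanning trees. Suppose, toward a contradiction, that there is an edge $e = (u,v) \in E \setminus T$ of weight $w(e)$ and an edge $f$ on the unique $u$--$v$ path $P$ in $T$ with $w(f) < w(e)$. I will construct a spanning tree of strictly larger total weight than $T$ by swapping $f$ out for $e$.

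First I would verify that the swap produces a valid spanning tree. Since $T$ is a tree and $u,v$ are already connected in $T$ via the path $P$, adding the edge $e$ to $T$ creates exactly one cycle, namely $C := P \cup \{e\}$. Removing any edge of $C$ from $T \cup \{e\}$ leaves a subgraph with $|V|-1$ edges that is still connected (any path in $T$ that used $f$ can be rerouted around the remaining portion of $C$), hence a spanning tree. Taking $T' := (T \setminus \{f\}) \cup \{e\}$, this is therefore a spanning tree of $G$.

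Next I would compare weights. By construction,
\begin{align*}
w(T') = w(T) - w(f) + w(e) > w(T),
\end{align*}
where the strict inequality uses the standing assumption $w(f) < w(e)$. This contradicts the assumption that $T$ is a maximal weighted spanning tree of $G$. Hence no such edge $f$ exists, and every edge on the $u$--$v$ path in $T$ must have weight at least $w(e)$.

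There is no real obstacle here; the only points requiring a sentence of care are the fact that adding one edge to a spanning tree creates exactly one cycle, and the fact that deleting any edge of that cycle restores a spanning tree. Both are standard consequences of $T$ being a tree on $|V|$ vertices with $|V|-1$ edges, and I would mention them briefly rather than prove them from scratch.
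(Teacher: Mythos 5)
Your proposal is correct and follows exactly the same exchange argument as the paper's proof: delete an edge $f$ of the tree path with $w(f) < w(e)$, add $e$, and contradict maximality. The extra care you take in verifying that the swap yields a spanning tree is a welcome but routine elaboration of the same idea.
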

\begin{proof}
Suppose that some edge $f$ in the path connecting the endpoints of $e$ in $T$ has $w_f < w_e$. Then deleting $f$ and adding $e$ to $T$ gives us a spanning tree whose weight is strictly greater than that of $T$, which contradicts the maximality of $T$.
\end{proof}
This completes the preliminary steps. We are now ready to prove Theorem \ref{mainresult}.
\subsection{Proof of Theorem \ref{mainresult}}
First, take any distinct $i,j\in V$. Let $k$ be the vertex adjacent to $j$ on the path joining $i$ to $j$ in $T$. Then by Proposition \ref{causalprop}, Proposition \ref{dpimaxcor} and the assumption that $R_{kj}\le 1-\delta$ from the statement of Theorem~\ref{mainresult}, we get that $R_{ij}\le R_{kj} \le 1-\delta$. Thus, 
\begin{align}\label{rijbound}
R_{ij} \le 1-\delta \ \ \text{for all distinct $i,j\in V$.}
\end{align}
Take any distinct $i,j,k\in V$ such that $j$ is on the path connecting $i$ and $k$ in $T$. Then by Proposition \ref{causalprop}, either $X_i$ and $X_k$ are conditionally independent given $X_j$, or $X_i$ and $X_k$ are unconditionally independent. In the first case, by Proposition \ref{xidpi} and the inequality \eqref{rijbound}, 
\begin{align}\label{xiki}
\xi_{ki}\le R_{kj}^2\xi_{ji}\le (1-\delta)^2 \xi_{ji}.
\end{align}
In the second case, $\xi_{ki}=0$, and so the above inequality holds anyway. If $i$ and $j$ are neighbors, this shows that
\begin{align}\label{xigamma}
\xi_{ki} \le \xi_{ji} - (2\delta - \delta^2) \xi_{ji} \le \xi_{ji} - (2\delta - \delta^2)\delta,
\end{align}
because $2\delta - \delta^2 > 0$ and $\xi_{ji}\ge\delta$. Define $\gamma := (2\delta - \delta^2)\delta$.
Note that since $\delta >0$, $2\delta > \delta^2$ and $2\delta - \delta^2 \le 1$ (because $(1-\delta)^2 \ge 0$), we have
\begin{align}\label{gammaineq}
0 < \gamma \le \delta. 
\end{align}
We claim that if in a particular realization of $X^1,\ldots,X^n$, we have 
\begin{align}\label{xincond}
|\xi_{ij}^n - \xi_{ij}|\le \frac{\gamma}{4} \ \ \text{ for all $i,j$,}
\end{align} 
then $T_n=T$. To see this, suppose that \eqref{xincond} holds in some realization of the data. First, taking any edge $(i,j)$ in $T$, let us show that $(i,j)$ is in $G_n$. Deleting the edge $(i,j)$ from $T$ splits $T$ into two disjoint connected components $A$ and $B$, with $i\in A$ and $j\in B$. Take any $k\in A\setminus \{i\}$. Then by the definition of $\gamma$ and the inequality~\eqref{xigamma}, we have $\xi_{ij} \ge \xi_{kj} +\gamma$. Thus, by \eqref{xincond} and~\eqref{gammaineq},
\begin{align}
\xi_{kj}^n &\le \xi_{kj} + \frac{\gamma}{4}\le \xi_{ij} -\gamma +  \frac{\gamma}{4}\notag \\
&\le \xi_{ij}^n+\frac{\gamma}{4}-\gamma + \frac{\gamma}{4} < \xi_{ij}^n.\label{correct}
\end{align}
Similarly, if $k\in B\setminus \{j\}$, $\xi_{ki}^n < \xi_{ji}^n$. Therefore, for any $k\in V\setminus\{i,j\}$, the condition~\eqref{mainxicond} is violated. This shows that $(i,j)$ is an edge in $G_n$.

Thus, $T$ is a subgraph of $G_n$ under \eqref{xincond}. This shows, in particular, that $G_n$ is connected. Now recall that $T_n$ is an MWSF of $G_n$ when the edges are endowed with the weights $w_{ij}^n$. By  \eqref{gammaineq} and the definition of $\delta$, we have
\begin{align*}
w_{ij}\ge \delta\ge \gamma>0  \ \ \textup{for all edges $(i,j)$ in $T$.}
\end{align*} 
Thus, under~\eqref{xincond}, by \eqref{gammaineq} we get
\begin{align}\label{wnineq}
w_{ij}^n \ge \delta - \frac{\gamma}{4} > \frac{\delta}{2} > 0 \ \ \textup{for all edges $(i,j)$ in $T$.}
\end{align}
Since $T_n$ is an MWSF of the connected graph $G_n$ and every edge-weight is strictly positive (by \eqref{wnineq}), we deduce that $T_n$ is a maximal weighted spanning tree of $G_n$ under \eqref{xincond}. Moreover, as noted above, $T$ is a spanning tree of $G_n$. Thus, to show that $T=T_n$, we only need to prove that any edge of $G_n$ that is not in $T$ cannot be in $T_n$.

So, take any edge $(i,j)$ of $G_n$ that is not in $T$ but is in $T_n$. We will prove by contradiction that such an edge cannot exist. First, we claim that $w_{ij}^n > \delta/2$. Let $P$ be the path in $T$ that connects $i$ to $j$. Then by \eqref{wnineq}, we have that $w_{kl}^n > \delta/2$ for every edge $(k,l)$ in $P$.

Since $(i,j)$ is an edge in $T_n$, deleting the edge $(i,j)$ splits $T_n$ into two components $A$ and $B$, with $i\in A$ and $j\in B$. Since $P$ connects $i$ to $j$ in $T$, there must be at least one edge $(k,l)$ in $P$ such that $k\in A$ and $l\in B$. Then note that $(k,l)$ is not an edge of $T_n$, and the path joining $k$ to $l$ in $T_n$ contains $(i,j)$. By Lemma \ref{mstlmm} and the fact that $(i,j)$ is in $G_n$, this proves that $w_{ij}^n \ge w_{kl}^n > \delta/2$. 

Next, to get a contradiction, we will show that $w_{ij}^n \le  \delta/2$. Take any vertex $k$ in $P$ that is not $i$ or $j$. Since $(i,j)$ is an edge of $G_n$, it follows that either $\xi_{ki}^n < \xi_{ji}^n$ or $\xi_{kj}^n < \xi_{ij}^n$. Suppose that $\xi_{ki}^n < \xi_{ji}^n$. Then by \eqref{xincond} and \eqref{xiki},
\begin{align*}
\xi_{ji}^n &\le \xi_{ji} + \frac{\gamma}{4} \le (1-\delta)^2 \xi_{ki} + \frac{\gamma}{4}\\
&\le (1-\delta)^2 \biggl(\xi_{ki}^n + \frac{\gamma}{4}\biggr)+ \frac{\gamma}{4}\\
&\le (1-\delta)^2 \biggl(\xi_{ji}^n + \frac{\gamma}{4}\biggr)+ \frac{\gamma}{4}.
\end{align*}
Rearranging this inequality,  we get
\begin{align*}
\xi_{ji}^n &\le \frac{1}{2\delta - \delta^2} \biggl((1-\delta)^2\frac{\gamma}{4}+\frac{\gamma}{4}\biggr)\\
&\le \frac{\gamma}{2(2\delta - \delta^2)} = \frac{\delta}{2}. 
\end{align*}
Similarly, if $\xi_{kj}^n <\xi_{ij}^n$, we get $\xi_{ij}^n \le \delta/2$. Combining the two cases, we have that $w_{ij}^n \le \delta/2$, which gives the desired contradiction that proves that $T_n=T$ if \eqref{xincond} holds in some realization of the data.

Now note that by assumption, we have that for all $i,j\in V$,
\begin{align*}
|\ee(\xi_{ij}^n) - \xi_{ij}| &\le \frac{\delta^2}{8} \le \frac{\delta^2(2-\delta)}{8} = \frac{\gamma}{8}.
\end{align*}
Thus, for \eqref{xincond} to hold, it suffices that we have
\begin{align*}
|\xi_{ij}^n - \ee(\xi_{ij}^n)|\le \frac{\gamma}{8} \ \ \text{ for all $i,j$.}
\end{align*} 
To prove the required bound, it is therefore sufficient to show that for any distinct $i,j\in V$, 
\begin{align}\label{mainineq}
\pp\biggl(|\xi_{ij}^n - \ee(\xi_{ij}^n)|> \frac{\gamma}{8}\biggr) &\le C_1(\delta)e^{-C_2(\delta)n},
\end{align}
where $C_1(\delta)$ and $C_2(\delta)$ are positive constants that depend only on $\delta$. By Proposition \ref{concprop}, this holds if we replace $\gamma/8$ on the left side by any $t\in [0,C_3(\delta)]$, where $C_3(\delta)$ is another positive constant that depends only on $\delta$, and insert $t^2$ inside the exponent on the right. Taking $t= \min\{\gamma/8, C_3(\delta)\}$, and observing the left side can only increase if we replace $\gamma/8$ by $t$, we get the desired result.

\section{Proof of Theorem \ref{directionthm}} 
Consider a modified algorithm, where instead of working with the estimated skeleton, we apply the algorithm to assign directionalities to the edges of the actual skeleton. This modified algorithm cannot be implemented in practice because the actual skeleton is unknown, but Theorem~\ref{mainresult} implies that the modified algorithm gives the same output as the original algorithm with probability at least $1-C_1(\delta)p^2 e^{-C_2(\delta) n}$. Thus, it suffices to prove Theorem \ref{directionthm} for the modified algorithm. 

Next, we do a second modification of the algorithm. In Step 1 of the algorithm, we replace $\tau_{kji}^n$ by $\tau_{kji}$ and $\xi_{jk}^n$ by $\xi_{jk}$ everywhere. Again, this cannot be implemented in practice, because $\tau_{kji}$ and $\xi_{jk}$ are unknown. In fact, since we have already replaced the estimated skeleton by the actual skeleton in the first modification, the algorithm produced by the second modification has no dependence on the data at all. Nevertheless, the following is true.
\begin{lmm}\label{approxlmm}
Consider the algorithm produced by the two modifications described above. The tree obtained by running this algorithm is equal to the tree produced by the original algorithm with probability at least $1-C_1(\delta)p^3 e^{-C_2(\delta) n}$.
\end{lmm}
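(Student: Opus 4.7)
The plan is to show that the two algorithms' outputs agree with high probability by arguing that each data-dependent decision they make agrees. Only Steps 1 and 2 of the direction-assignment procedure are actually data-dependent, and the only data-dependent choices inside those steps are boolean outcomes of comparisons of the form $\tau_{kji}^n \ge \xi_{jk}^n$ (respectively $\tau_{kji} \ge \xi_{jk}$) over triples $(i,j,k)$ with $i$ a common neighbor of $j$ and $k$ in $T$. Steps 3--5, as well as the choice of which triples to inspect and in what order, are fixed in advance and identical in both algorithms. Hence it suffices to show that, with probability at least $1-C_1(\delta)p^3e^{-C_2(\delta)n}$, the inequality $\tau_{kji}^n \ge \xi_{jk}^n$ has the same truth value as $\tau_{kji}\ge \xi_{jk}$ simultaneously for every relevant triple.

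The gap argument is as follows. Fix such a triple $(i,j,k)$. Since $T$ is a tree and $i$ is adjacent to both $j$ and $k$, the vertex $i$ lies on the path from $j$ to $k$. Proposition \ref{causalprop} (applied with $j$, $i$, $k$ in the roles there called $i$, $j$, $k$) therefore implies that either $X_j$ and $X_k$ are conditionally independent given $X_i$, in which case $\tau_{kji}=0$, or $X_j$ and $X_k$ are unconditionally independent, in which case $\xi_{jk}=0$. The first case corresponds to the ``non-collider'' configuration at $i$ in every valid DAG, and Assumption~1 of Theorem~\ref{directionthm} supplies $\xi_{jk}\ge \delta$; the second corresponds to the ``collider'' configuration $j\to i\leftarrow k$, and Assumption~1 supplies $\tau_{kji}\ge \delta$. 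Either way, $|\tau_{kji}-\xi_{jk}|\ge \delta$ and the sign of $\tau_{kji}-\xi_{jk}$ is preserved under any joint perturbation of size $<\delta/2$ in the two quantities.

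It therefore remains to establish exponential concentration $|\tau_{kji}^n-\tau_{kji}|\le \delta/4$ and $|\xi_{jk}^n-\xi_{jk}|\le \delta/4$ for all relevant triples and pairs. For $\xi_{jk}^n$ this is immediate from Proposition~\ref{concprop} combined with the bias bound $|\ee(\xi_{ij}^n)-\xi_{ij}|\le \delta^2/8$ inherited from Theorem~\ref{mainresult}. For $\tau_{kji}^n=q_{kji}^n/s_{ji}^n$, the strategy mirrors the proof of Proposition~\ref{concprop}: first establish exponential concentration of $q_{kji}^n$ and $s_{ji}^n$ around their means via a bounded-differences (McDiarmid) argument, using that changing a single sample $X^m$ alters only a bounded number of nearest-neighbor indices $N(\cdot)$ and $M(\cdot)$ and rank values $R_i$ (the classical bounded-degree property of $1$-nearest-neighbor graphs in $\rr$ and $\rr^2$ ensures this), so each summand has sensitivity $O(1/n)$. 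Assumption~3 of Theorem~\ref{directionthm} then controls the bias of $q_{kji}^n$ and $s_{ji}^n$, and Assumption~2 provides $s_{ji}\ge \delta$, so the ratio bound argument from the proof of Proposition~\ref{concprop} (cf.\ \eqref{qnsn} and the paragraph that follows it) transfers the concentration to $\tau_{kji}^n$ around $\tau_{kji}$. A union bound over the $O(p^3)$ triples and $O(p^2)$ pairs yields the claimed probability.

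The principal technical obstacle is the bounded-differences verification for $q_{kji}^n$ and $s_{ji}^n$: unlike $\xi_n$, whose concentration we already have from Proposition~\ref{concprop}, the $\tau_n$ statistic involves nearest-neighbor graphs in two different ambient spaces as well as ranks that shift when a sample is modified. One must carefully bound how many indices $i$ can have $N(i)$ or $M(i)$ equal to the perturbed index $m$ (a uniform constant in any fixed dimension) in order to conclude the $O(1/n)$ sensitivity; the rank contribution is trivially $O(1/n)$ because ranks can shift by at most one when a single $Y_m$ is altered. Once this is in hand, everything else is a direct transcription of the ratio-concentration argument already developed for Proposition~\ref{concprop}.
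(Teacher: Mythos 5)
Your argument follows the paper's proof almost exactly: reduce to the agreement of the boolean comparisons $\tau_{kji}^n \ge \xi_{jk}^n$ versus $\tau_{kji}\ge\xi_{jk}$, establish a population gap of size $\delta$ via the collider/non-collider dichotomy and Assumption~1 of Theorem~\ref{directionthm}, and then transfer concentration of $q_{kji}^n$ and $s_{ji}^n$ to $\tau_{kji}^n$ through the ratio using Assumptions~2 and~3, finishing with a union bound over $O(p^3)$ triples. Two remarks. First, the step you flag as the ``principal technical obstacle'' --- exponential concentration of $q_{kji}^n$ and $s_{ji}^n$ about their means --- is not something the paper re-proves: it is quoted directly from \cite[Lemmas 11.9 and 13.3]{azadkiachatterjee21}, which give exactly the bound $\pp(|q_{kji}^n-\ee(q_{kji}^n)|>\delta^3/8)\le C_1 e^{-C_2 n\delta^6}$ (and likewise for $s_{ji}^n$). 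Your McDiarmid-plus-bounded-NN-degree sketch is the right mechanism and would work, but as written it is incomplete (you would also need to handle the randomized tie-breaking in $N(\cdot)$ and $M(\cdot)$, which makes the statistic a randomized function of the sample), and it is unnecessary given the citation. Second, your claim that ``the choice of which triples to inspect \dots is fixed in advance and identical in both algorithms'' is only true on the event $T_n=T$: the original algorithm runs Step~1 over neighbors in the \emph{estimated} skeleton, so the set of inspected triples is itself data-dependent. The paper handles this by explicitly including $T_n=T$ in the good event $E$, at a further cost of $C_1(\delta)p^2e^{-C_2(\delta)n}$ via Theorem~\ref{mainresult}; you should do the same. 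With those two adjustments your argument coincides with the paper's.
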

\begin{proof}
Throughout this proof, we will say that an event occurs with `high probability' if it occurs with probability at least $1-C_1(\delta)p^3 e^{-C_2(\delta) n}$. 
Take any distinct $i,j,k\in V$ such that $i$ and $j$ are neighbors in $T$. 
By \cite[Lemma 11.9]{azadkiachatterjee21} and \cite[Lemma 13.3]{azadkiachatterjee21}, 
\[
\pp\biggl(|q_{kji}^n - \ee(q_{kji}^n)| > \frac{\delta^3}{8}\biggr) \le C_1 e^{-C_2 n \delta^6},
\]
and the same bound holds for $\pp(|s_{ji}^n - \ee(s_{ji}^n)|>\delta^3/8)$. If none of these bad events happen, then by the assumptions that $s_{ij}\ge \delta$, $|\ee(q_{kji}^n) - q_{kji}| \le \delta^3/8$ and $|\ee(s_{ji}^n) - s_{ji}|\le \delta^3/8$, we get
\begin{align*}
|\tau_{kji}^n - \tau_{kji}| &\le \frac{|q_{kji}^n - q_{kji}|}{s_{ji}^n} + \frac{|q_{kji}| |s_{ji}^n - s_{ji}|}{s_{ji}^ns_{ji}}\\
&\le \frac{\delta^3/4}{\delta + \delta^3/4} + \frac{\delta^3/4}{(\delta+\delta^3/4)\delta}\\
&\le \frac{\delta^2}{4} + \frac{\delta}{4}\le \frac{\delta}{2}.
\end{align*}
Thus, with high probability, $|\tau_{kji}^n - \tau_{kji}| \le \delta/2$ for all distinct $i,j,k\in V$ such that $i$ and $j$ are neighbors in $T$.

Now, by one of the assumptions in Theorem \ref{mainresult}, $|\ee(\xi_{ij}^n) - \xi_{ij}| \le \delta^2/8\le \delta/8$ for all distinct $i,j\in V$. By Proposition \ref{concprop}, $|\xi_{ij}^n - \E(\xi_{ij}^n)|\le \delta/8$ for all distinct $i,j\in V$ with high probability. Combining, we have that with high probability, $|\xi_{ij}^n - \xi_{ij}| \le \delta/4$ for all distinct $i,j\in V$.

Let $E$ be the event that $|\tau_{kji}^n - \tau_{kji}| \le \delta/2$ for all distinct $i,j,k\in V$ such that $i$ and $j$ are neighbors in $T$, and that $|\xi_{ij}^n - \xi_{ij}| \le \delta/4$ for all distinct $i,j\in V$, and that $T_n=T$. By the previous paragraphs and Theorem \ref{mainresult}, we know that $E$ is an event of high probability. Thus, to complete the proof of the lemma, it suffices to show that the modified algorithm gives the same output as the original algorithm if $E$ happens.

So, let us assume that $E$ happens. Then, we claim that for any distinct $i,j,k\in V$ such that $j$ and $k$ are both neighbors of $i$ in $T$, $\tau_{kji}^n \ge \xi_{jk}^n$ if and only if $\tau_{kji} \ge \xi_{jk}$. Clearly, this will imply that the output produced by the modified algorithm is the same as the output produced by the original algorithm.

To prove this, take any valid set of directionalities of the edges of $T$. There are four possibilities:
\begin{enumerate}
\item[Case 1.] $j \to i \leftarrow k$. In this case, $X_j$ and $X_k$ are independent, which implies that $\xi_{jk}=0$, and $\tau_{kji}\ge \delta$ by assumption. Thus, $\tau_{kji}\ge \xi_{jk}$, and by $E$, 
\[
\tau_{kji}^n \ge \tau_{kji} - \frac{\delta}{2}\ge \frac{\delta}{2}= \xi_{jk} + \frac{\delta}{2}\ge \xi_{jk}^n.
\]
\item[Case 2.] $j \to i \to k$.  In this case, $X_j$ and $X_k$ are conditionally independent given $X_i$, and hence $\tau_{kji} = 0$. On the other hand, by assumption, $\xi_{jk}\ge \delta$. Thus, $\tau_{kji} < \xi_{jk}$, and by $E$,
\[
\tau_{kji}^n \le \tau_{kji} + \frac{\delta}{2}= \frac{\delta}{2} = \xi_{jk} - \frac{\delta}{2}\le \xi_{jk}^n-\frac{\delta}{4} <\xi_{jk}^n.
\]
\item[Case 3.] $j\leftarrow i\leftarrow k$. The argument is identical to that of Case 2.
\item[Case 4.] $j\leftarrow i\to k$. Again, the same argument as in Case 2 works.
\end{enumerate}
This completes the proof of the lemma.
\end{proof}
Given Lemma \ref{approxlmm}, it is clear that the following lemma completes the proof of Theorem \ref{directionthm}.
\begin{lmm}\label{constructlmm}
The tree obtained by the modified algorithm is a valid DAG.
\end{lmm}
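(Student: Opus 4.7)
The plan is to show that the output of the modified algorithm has the same skeleton as $T$ (trivially so, since we feed the actual $T$ in) and the same v-structures as any valid DAG $G$ of $X$ with skeleton $T$. Because any orientation of a tree is acyclic, the output is automatically a DAG; and by the standard equivalence theorem (Verma--Pearl), two DAGs with matching skeleton and v-structures are Markov equivalent and thus describe the same joint law. So proving this will establish that the output is a valid DAG for $X$.

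The key observation, essentially extracted from the case analysis inside the proof of Lemma \ref{approxlmm}, is that for neighbors $j,k$ of $i$ in $T$ the population-level test $\tau_{kji} \ge \xi_{jk}$ holds if and only if $j\to i\leftarrow k$ is a v-structure in $G$ (in the v-structure case $\xi_{jk}=0$ by Proposition \ref{causalprop}, while in the other three configurations $\tau_{kji}=0$ and $\xi_{jk}\ge \delta$). Given this, I would analyze Steps 1--2 by induction on iterations. In Case 1 at $i$, the first pair triggering the test is a genuine v-structure and is correctly marked. In Case 2 at $i$, once an incoming edge $(j,i)$ is known, every other neighbor $k$ is correctly classified as a co-parent of $i$ (incoming) or a child of $i$ (outgoing). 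A short invariant argument then shows that after Steps 1--2 converge, every vertex that ever acquires an incoming edge has all its incident edges correctly oriented in agreement with $G$, and no false v-structures are introduced.

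Let $A$ be the set of vertices with at least one incoming edge after Steps 1--2 and $B := V \setminus A$. I would verify that every determined edge between $A$ and $B$ points from $B$ to $A$ (a $B$-vertex with a determined $A$-parent would itself be in $A$), so the undetermined edges form the subforest $T'$ of $T$ induced on $B$; in particular Step 3 is vacuous because every $A$-vertex already has all its incident edges determined. Since no vertex of $B$ is a v-structure center of $G$ (else Case 1 would have fired there), the restriction of $G$ to $B$ is a disjoint union of outgoing polytrees, and any outgoing orientation of $T'$ is Markov equivalent to it. Step 5 supplies such an outgoing orientation from an arbitrary root of $T$, giving each $B$-vertex at most one incoming arrow within $T'$ and only outgoing arrows toward $A$, so no new v-structures are created. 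Hence the v-structures of the output coincide with those of $G$, which, together with the matching skeleton, proves the lemma. The main difficulty is the bookkeeping in Steps 1--2: specifically, arguing rigorously that the first time any vertex acquires an incoming edge, the very next Case 2 execution at that vertex fixes all its remaining incident edges correctly, so that no vertex in $A$ is left with dangling undetermined edges for Steps 3 or 5 to handle.
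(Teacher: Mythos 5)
Your proposal is correct, and its core --- the population-level ``basic observation'' that $\tau_{kji}\ge \xi_{jk}$ holds precisely when $j\to i\leftarrow k$, the induction showing Steps 1--2 never misorient an edge, and the identification of the still-undetermined part as a union of subtrees on vertices with at most one parent --- is essentially the paper's argument (the paper phrases your $A$/$B$ split as ``type 2'' versus ``type 1'' vertices and edges). Where you genuinely diverge is the last step: the paper stays self-contained, proving Lemmas \ref{outgoing1} and \ref{outgoing2} directly from the factorization \eqref{density} to show that re-rooting an outgoing subtree whose boundary edges all point outward preserves validity, whereas you invoke the Verma--Pearl characterization of Markov equivalence. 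That route works, but note that ``Markov equivalent, hence describes the same joint law'' is a slight misstatement --- what you need is the standard (but not entirely trivial) fact that a density factorizing according to one DAG factorizes according to any Markov-equivalent DAG, which is exactly the content the paper's two lemmas supply in the special case actually needed; citing the general theorem buys brevity at the cost of an external dependency the paper deliberately avoids. Two smaller points: your observation that Step 3 is vacuous for the modified algorithm is correct and is a mild simplification (since Case 2 determines \emph{all} edges at a vertex once one incoming edge is found, and Step 2 iterates to convergence), whereas the paper instead argues defensively that Step 3 commits no errors; and your check that no new v-structures arise at $B$-vertices should also record, as you implicitly do, that every determined $B$-to-$A$ edge points into $A$, so the only possible incoming arrow at a $B$-vertex in the output comes from within its component of $T'$.
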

We need some preparation for the proof of Lemma \ref{constructlmm}.
\begin{lmm}\label{outgoing1}
Suppose that the dependency structure of $X = (X_i)_{i\in V}$ is described by an outgoing causal polytree on $V$. Then the dependency structure of $X$ is described by any other outgoing polytree on $V$ with the same skeleton.
\end{lmm}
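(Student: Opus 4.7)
The plan is to reduce, by induction on the distance $d$ in $T$ between the root $r$ of $G$ and the root $r'$ of the other outgoing polytree $G'$, to the case where $r$ and $r'$ are adjacent in $T$, i.e., where $G'$ is obtained from $G$ by reversing a single edge. When $d = 0$ we have $G = G'$ and there is nothing to prove. For $d \ge 1$, I would let $r''$ be the neighbor of $r$ on the $T$-path from $r$ to $r'$, first show that the outgoing polytree $G''$ with root $r''$ describes the dependency structure of $X$, and then apply the inductive hypothesis with $G''$ in place of $G$ (since the distance from $r''$ to $r'$ is $d-1$) to conclude.

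The heart of the argument is thus the base case: if $r$ and $r'$ are adjacent in $T$, then $G'$ describes the dependency structure of $X$. The first observation is that the parent of every vertex $v \notin \{r, r'\}$ is the same in $G$ and $G'$: since $r$ and $r'$ are adjacent, the unique $T$-paths from $v$ to $r$ and from $v$ to $r'$ begin with the same neighbor of $v$, so the parent of $v$ (the first step toward the root) is unchanged. Consequently the factorization~\eqref{density} for $G$ reads
\[
f(x) = f_r(x_r)\, f_{r'}(x_{r'}\mid x_r) \prod_{i \notin \{r, r'\}} f_i(x_i \mid x_{p(i)}),
\]
and to obtain the corresponding factorization for $G'$ (whose only parental differences are that $r$ now has parent $r'$ and $r'$ has no parent), it suffices to rewrite the product $f_r(x_r) f_{r'}(x_{r'} \mid x_r)$ as $g_{r'}(x_{r'})\, g_r(x_r \mid x_{r'})$, where $g_{r'}$ is the marginal density of $X_{r'}$ with respect to $\mu$ and $g_r(\cdot \mid x_{r'})$ is the conditional density of $X_r$ given $X_{r'}$.

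To justify this rewriting, I would first show that $f_r(x_r) f_{r'}(x_{r'} \mid x_r)$ equals the joint density of $(X_r, X_{r'})$ with respect to $\mu \times \mu$. This follows by integrating out every $x_i$ with $i \notin \{r, r'\}$ in reverse $G$-topological order (leaves first): at each step the remaining integrand contains a unique factor depending on the variable being integrated, namely $f_v(x_v \mid x_{p(v)})$, which integrates to $1$ over $x_v$ by the defining property of conditional densities. Once the joint density of $(X_r, X_{r'})$ is identified with $f_r(x_r) f_{r'}(x_{r'} \mid x_r)$, the rewriting as $g_{r'}(x_{r'}) g_r(x_r \mid x_{r'})$ is a direct application of the definition of conditional density, and substituting back into the full factorization yields exactly the factorization required for $G'$.

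The main obstacle should just be the bookkeeping: verifying that the parent of each $v \notin \{r, r'\}$ is indeed unchanged when the root is moved, checking that the order-of-integration argument really does isolate a single factor at every step, and handling the standard measure-theoretic caveat that $g_r(\cdot \mid x_{r'})$ is defined only on $\{x_{r'} : g_{r'}(x_{r'}) > 0\}$ (elsewhere it can be set arbitrarily without affecting the density identity).
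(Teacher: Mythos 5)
Your proposal is correct and follows essentially the same route as the paper: reduce by induction to the case where the two roots are adjacent, observe that only the factors for $r$ and $r'$ change, and rewrite $f_r(x_r)f_{r'}(x_{r'}\mid x_r)$ as the joint density of $(X_r,X_{r'})$ factored in the other order. The paper's proof is terser (it takes the rewriting and the root-shifting induction for granted), while you spell out the integration argument identifying the two-factor product with the joint density and the null-set caveat, but the underlying idea is identical.
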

\begin{proof}
Let $i\in V$ be the root, and $T$ be the skeleton, of an outgoing polytree that describes the dependency structure of $X$. Let $c(i)$ be the set of children of $i$ in this polytree, and let $U := \{i\}\cup c(i)$. Then the joint density of $X$ (with respect to a suitable product measure) is of the form
\begin{align*}
f_i(x_i) \biggl(\prod_{j\in c(i)} f_j(x_j|x_i) \biggr)\biggl(\prod_{k\in V\setminus U} f_k(x_k|(x_l)_{l\in p(k)}\biggr),
\end{align*}
where $p(k)$ denotes the set of parents of $k$ in  the polytree.
Take any $i'\in c(i)$. The above expression shows that the density of $X$ can be rewritten as 
\begin{align*}
f_{i'}(x_{i'}) f_i(x_i|x_{i'}) \biggl(\prod_{j\in c(i)\setminus\{i'\}} f_j(x_j|x_i) \biggr)\biggl(\prod_{k\in V\setminus U} f_k(x_k|(x_l)_{l\in p(k)}\biggr). 
\end{align*}
This shows that if we reverse the direction of the arrow from $i$ to $i'$, the resulting polytree is again a valid DAG for $X$.  But this is just the outgoing tree with skeleton $T$ and root $i'$. In other words, there is no harm is shifting the root to one of its neighbors. By induction, this completes the proof.
\end{proof}

\begin{lmm}\label{outgoing2}
Suppose that a polytree describing the dependency structure of $X= (X_i)_{i\in V}$ has a subtree that is outgoing, and every edge connecting a vertex in the subtree to a vertex outside the subtree is directed towards the vertex outside the subtree. Then, if we replace the directionalities of the edges in the subtree by any other set of directionalities that preserves the outgoing property, the resulting polytree is still a valid DAG for $X$.
\end{lmm}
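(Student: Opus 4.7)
The plan is to reduce to Lemma \ref{outgoing1} by isolating the subtree's contribution to the joint density. Let $S\subseteq V$ denote the vertex set of the outgoing subtree, with root $r$ and subtree-parent function $q$. Because every edge crossing from $S$ to $V\setminus S$ is directed outward, no vertex of $S$ has a parent outside $S$; combined with the outgoing property inside $S$, this shows that $S$ is closed under ancestors in the ambient DAG. So I will split the factorization as
\begin{align*}
f(x) = F_S(x_S)\cdot F_{V\setminus S}(x), \qquad F_S(x_S) := f_r(x_r)\prod_{i\in S\setminus\{r\}} f_i(x_i\mid x_{q(i)}),
\end{align*}
where $F_S$ depends only on $x_S$ and $F_{V\setminus S}$ collects the factors for $k\in V\setminus S$.

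Next I will invoke the standard DAG identity that, since $V\setminus S$ is closed under descendants, $\int F_{V\setminus S}(x)\,\prod_{k\notin S}dx_k = 1$ for every $x_S$. Integrating $f$ over $x_{V\setminus S}$ therefore identifies $F_S(x_S)$ with the marginal density of $X_S$, and this identification shows that $X_S$ is itself described by the outgoing polytree on $S$ rooted at $r$. Now I can apply Lemma \ref{outgoing1} directly to $X_S$: for any other outgoing orientation of the same skeleton on $S$, with root $r'$ and new parent function $q'$, the marginal density of $X_S$ admits the factorization $g_{r'}(x_{r'})\prod_{i\in S\setminus\{r'\}} g_i(x_i\mid x_{q'(i)})$ using the correct marginal and conditional densities. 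Substituting back into $f$ yields the factorization demanded by the modified polytree.

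Finally, I need to check that the modified graph is actually a DAG. The reversals happen only inside $S$, so any directed cycle either lies wholly within $S$ — impossible because the skeleton of $S$ is a tree and the new orientation is still outgoing from a single root — or uses at least one edge crossing between $S$ and $V\setminus S$. But all crossing edges still point outward from $S$, so a directed walk that leaves $S$ can never return, ruling out cycles of the second type.

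The main obstacle is the identification of $F_S(x_S)$ with the true marginal density of $X_S$: everything downstream uses this to lift Lemma \ref{outgoing1} from the ambient setting down to the subtree. Fortunately the hypothesis that all crossing edges point outward from $S$ is exactly what one needs to close $S$ under ancestors, so the integration identity for $F_{V\setminus S}$ goes through; the rest of the argument is then essentially bookkeeping.
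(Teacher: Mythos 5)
Your proof is correct and follows essentially the same route as the paper's: both factor the joint density into the product over the subtree (which, since all crossing edges point outward, involves only the subtree variables) times the product over the complement, integrate out the complement to identify the first factor as the marginal of $X_S$, apply Lemma \ref{outgoing1} to re-root the subtree, and substitute back. Your explicit acyclicity check at the end is a small addition the paper leaves implicit, but the decomposition and the key lemma invoked are the same.
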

\begin{proof}
Let $U$ be the vertex set of the subtree. Since each edge connecting a vertex of $U$ to a vertex outside $U$ is directed outwards, $p(i)\subseteq U$ for any $i\in U$. Since the joint density of $X$ can be written as
\begin{align}\label{twobr}
\biggl(\prod_{i\in U} f_i(x_i|(x_j)_{j\in p(i)})\biggr) \biggl(\prod_{i\not\in U} f_i(x_i|(x_j)_{j\in p(i)})\biggr),
\end{align}
the above observation implies that the joint density of $(X_i)_{i\in U}$ can be obtained by integrating out $(x_i)_{i\not\in U}$, to get
\[
\prod_{i\in U} f_i(x_i|(x_j)_{j\in p(i)}).
\]
In particular, the subtree describes the dependency structure of $(X_i)_{i\in U}$. Since the subtree is outgoing, it can be replaced by any other outgoing tree with the same skeleton, by Lemma~\ref{outgoing1}. Since the second product in \eqref{twobr} is the conditional density of $(X_i)_{i\not\in U}$ given $(X_i)_{i\in U}$, and $p(i)$ remains unchanged after the above replacement for any $i\not\in U$, this completes the proof of the lemma.
\end{proof}
We are now ready to prove Lemma \ref{constructlmm}, and hence, Theorem \ref{directionthm}.
\begin{proof}[Proof of Lemma \ref{constructlmm}]
Fix a valid set of directionalities for the edges of the skeleton $T$. A basic observation that we will use throughout the proof is the following. Let $i$ be a vertex and $j,k$ be two neighbors of $i$ in $T$. If $j\to i\leftarrow k$, then by one of the assumptions of Theorem \ref{directionthm}, $\tau_{kji} \ge \delta$, and by the nature of causal polytrees, $\xi_{jk}=0$. Therefore in this case, $\tau_{kji}\ge \xi_{jk}$. On the other hand, if $j\to i\to k$ or $j\leftarrow i \leftarrow k$ or $j\leftarrow i\to k$, then $X_j$ and $X_k$ are conditionally independent given $X_i$, and hence $\tau_{kji}=0$, and by one of the assumptions of Theorem \ref{directionthm}, $\xi_{jk}\ge \delta$. Therefore in this case, $\tau_{kji} < \xi_{jk}$. We will refer to this as the `basic observation' below.

We will say that a vertex is of type 1 if it has zero or one incoming edges, and of type 2 if it has at least two incoming edges. We will say that an edge is of type 1 if both of its endpoints are of type 1, and of type 2 if at least one endpoint is of type 2. 

We make two claims. The first claim is that upon completion of Steps 1 and 2 of the modified algorithm, there are no edges of $T$ whose directionalities have been incorrectly identified. This is proved by induction. Suppose that this holds up to a certain stage, at which we are inspecting a vertex $i$. There are two cases:
\begin{itemize}
\item[Case 1.] {\it No incoming edge into $i$ has yet been detected.} In this case, recall that for each pair of neighbors $(j,k)$ of $i$ in the estimated skeleton (sequentially in some order that is not dependent on the data), the algorithm checks  whether $\tau_{kji}\ge \xi_{jk}$. If this is found to be  true for some $(j,k)$, then the algorithm declares that the edges $(j,i)$ and $(k,i)$ are both directed towards $i$ (unless determined otherwise in a previous step). Suppose that this is found to be true for some $(j,k)$. If at least one of the edges $(j,i)$ and $(k,i)$ in the polytree is not directed towards $i$, then by the basic observation made above, $\tau_{kji}<\xi_{jk}$. Since we know that $\tau_{kji}\ge \xi_{jk}$, this is impossible. Thus, the edges from $j$ and $k$ to $i$ are both directed towards $i$ in the polytree. By induction hypothesis, the algorithm could not have determined otherwise in some previous step. Thus, the algorithm makes no mistake in this case.
\item[Case 2.] {\it The algorithm has already determined that some neighbor $j$ of $i$ in the skeleton has an edge directed towards $i$.} In this case, recall that the algorithm fixes some such $j$ and checks for every neighbor $k$ of $i$ (other than $j$) whether $\tau_{kji} \ge \xi_{jk}$. For those $k$ for which this holds, it declares that the edge $(k,i)$ points towards $i$ (unless determined otherwise in a previous step). For those $k$ for which this does not hold, it declares that the edge $(k,i)$ points towards $k$ (unless determined otherwise in a previous step). By the induction hypothesis, the edge from $j$ to $i$ is indeed directed towards $i$. By the basic observation made above, the edge from $i$ to a neighbor $k$ is directed towards $i$ if and only if $\tau_{kji}\ge \xi_{jk}$. Again by the induction hypothesis, no contradictory conclusion could have been reached in a previous step. Thus, the algorithm assigns directionalities correctly in this case too.
\end{itemize}
Our second claim is that upon completion of Steps 1 and 2 of the modified algorithm, the directionalities of all type 2 edges are correctly identified. To prove this claim, take a type 2 edge $e$. At least one of its endpoints is of type 2, meaning that it has two incoming edges. Call this endpoint $i$, and let $(j,i)$ and $(k,i)$ be any two incoming edges. Then by the basic observation made above (and the fact that the algorithm makes no mistakes), these incoming edges will be detected in a run of Step 1 of the modified algorithm. Once an incoming edge is identified, the directionalities of all edges incident to $i$ will be correctly identified in subsequent runs of Step~1, again by the basic observation. In particular, the directionality of $e$ will be correctly identified by the time Steps 1 and 2 are completed. 

Thus, by the end of Step 2, the only edges whose directionalities have not yet been identified must be edges of type 1 --- that is, both endpoints of type 1. We now claim that no errors are made in Step 3. To see this, take any vertex $i$ that is examined in Step 3 and found to have an incoming edge. If there is an edge incident to $i$ whose directionality has not yet been identified, it must be an edge of type 1. Thus, $i$ must be a vertex of type 1. Since it already has an incoming edge (because the algorithm has made no mistakes yet), all the remaining edges incident to $i$ must be outgoing. Thus, no mistakes are made in Step 3.

After completing Step 3, the only edges whose directionalities remain unidentified are type 1 edges. Taken together, these edges form a bunch of disjoint subtrees of $T$. Let $S$ be one of these subtrees. Let $U$ be the set of vertices of $S$. Then each element of $U$ is a type 1 vertex, that is, it can have at most one incoming edge. Thus, in the causal polytree we have chosen, the directionalities of the edges of $S$ makes it an outgoing tree. Moreover, since we have completed Step 3, no edge connecting a vertex of $S$ to a vertex outside $S$ can be going out of $S$. By Lemma \ref{outgoing2}, this implies that any reassignment of directionalities to the edges of $S$ that preserves the outgoing property gives a valid DAG. Since this is true for every such $S$, Step 4 now yields a valid DAG.
\end{proof}

\end{document}